\newtheorem{theorem}{Theorem}[section]
\newtheorem*{thmintro}{Theorem}
\newtheorem{lemma}[theorem]{Lemma}
\newtheorem{proposition}[theorem]{Proposition}
\theoremstyle{definition}\newtheorem{definition}{Definition}
\theoremstyle{remark}
\theoremstyle{remark}\newtheorem{remark}[theorem]{Remark}
\newenvironment{romanlist}
        {\begin{enumerate}
        }
        {\end{enumerate}}
\newcounter{ticklistc}
\newenvironment{ticklist}
    	{\setcounter{ticklistc}{0}
	 \begin{list}{--}
	{\usecounter{ticklistc}}}{\end{list}}
\newcommand{\Z}{\mathbb Z}
\newcommand{\R}{\mathbb R}
\newcommand{\C}{\mathbb C}
\newcommand{\EE}{\mathbb E}
\newcommand{\F}{\mathcal F}
\renewcommand{\L}{\mathcal L}
\renewcommand{\S}{\mathcal S}
\newcommand{\Hom}{\mathrm{Hom}}
\newcommand{\G}{\Gamma}
\newcommand{\SI}{\Sigma}
\newcommand{\Arf}{\mathrm{Arf}}
\renewcommand{\i}{\mathit{(i)}}
\newcommand{\ii}{\mathit{(ii)}}
\newcommand{\iii}{\mathit{(iii)}}
\begin{document}

\title{The critical Ising model via Kac-Ward matrices}

\author{David Cimasoni}
\address{Section de math\'ematiques, 2-4 rue du Li\`evre, 1211 Gen\`eve 4, Switzerland}
\email{David.Cimasoni@unige.ch}
\subjclass[2000]{82B20, 57M15, 05C50}  
\keywords{critical Ising model, isoradial graph, Kac-Ward matrices, flat surface, discrete Laplacian}

\begin{abstract}
The Kac-Ward formula allows to compute the Ising partition function on any finite graph $G$ from the determinant of $2^{2g}$ matrices, where $g$ is the genus of a surface
in which $G$ embeds. We show that in the case of isoradially embedded graphs with critical weights, these determinants have quite remarkable properties.
First of all, they satisfy some generalized Kramers-Wannier duality: there is an explicit equality relating the determinants associated to a graph and to its dual graph.
Also, they are proportional to the determinants of the discrete critical Laplacians on the graph $G$, exactly when the genus $g$ is zero or one.
Finally, they share several formal properties with the Ray-Singer $\overline\partial$-torsions of the Riemann surface in which $G$ embeds.
\end{abstract}

\maketitle

%\tableofcontents

\pagestyle{myheadings}
\markboth{D. Cimasoni}{The critical Ising model via Kac-Ward matrices}

%%%%%%%%%%%%%%%%%%%%%%%%%%%%%%%%%%%

\section{Introduction}
\label{sec:intro}

Most of the exact results for the two-dimensional Ising model rely on the so-called {\em Pfaffian method\/}. The idea, due independently to Hurst-Green~\cite{H-G}, Kasteleyn~\cite{Ka1} and
Fisher~\cite{Fi1,Fi2}, is to associate to the given graph $G$ an auxiliary graph $\G_G$ such that the dimer partition function on $\G_G$ is equal to the Ising partition function on $G$. The dimer
model technology can then be applied to solve the Ising model on $G$. In particular, if $G$ can be embedded in an orientable surface of genus $g$, then so can $\G_G$, and its dimer partition
function can be computed as an alternated sum of the Pfaffians of $2^{2g}$ well-chosen skew-adjacency matrices~\cite{G-L,Tes,D96,C-RI}. Even though this method is very standard at least for
planar graphs (see for example the classical book~\cite{MCW}), it is not a very natural one. First of all, this so-called {\em Fisher correspondence\/} $G\mapsto\G_G$ is by no means unique:
for example, each of the articles~\cite{Ka1,Fi2,D96,BdT1,M-L,Cim2} contains a different version of it. Secondly, whatever the chosen correspondence, virtually all the geometric and combinatorial
properties of $G$ will get lost when passing to $\G_G$ -- the only obvious exception being the genus. This will not be a problem if one is studying a topological class of graphs, such as all planar
graphs, or all finite graphs of a given genus. However, if one is interested in a geometric class -- as we will be -- the Pfaffian method leads to unnecessary complications.

There is another combinatorial method to solve the two-dimensional Ising model which, although naturally related to the Pfaffian method (see~\cite[Subsection 4.3]{Cim2}), is in our opinion
much more natural. It is due to Kac-Ward~\cite{KW} (even though a rigorous proof awaited
many years~\cite{DZMSS}), was originally formulated for planar graphs, and recently extended to any finite graph~\cite{Loe,Cim2}. With this method, no auxiliary graph is needed: the Ising partition
function on a finite graph $G$ is computed as an alternated sum of the square roots of the determinants of $2^{2g}$ {\em Kac-Ward matrices\/} naturally associated to the graph $G$.

In this paper, we initiate the study of the critical Ising model on graphs of {\em arbitrary\/} genus, and we do so using these Kac-Ward matrices. What we mean by ``critical Ising model" will
be formally defined and thoroughly motivated in Section~\ref{sec:def}, but for now, we will content ourselves with an informal description. Consider a finite number of planar rhombi
$\{R_e\}_{e\in E}$ of equal side length, each rhombus $R_e$ having a fixed diagonal $e$ and corresponding half-rhombus angle $\theta_e\in(0,\pi/2)$, as illustrated below.

\begin{figure}[htb]
\labellist\small\hair 2.5pt
\pinlabel {$e$} at 113 68
\pinlabel {$\theta_e$} at 45 77
\endlabellist
\centerline{\psfig{file=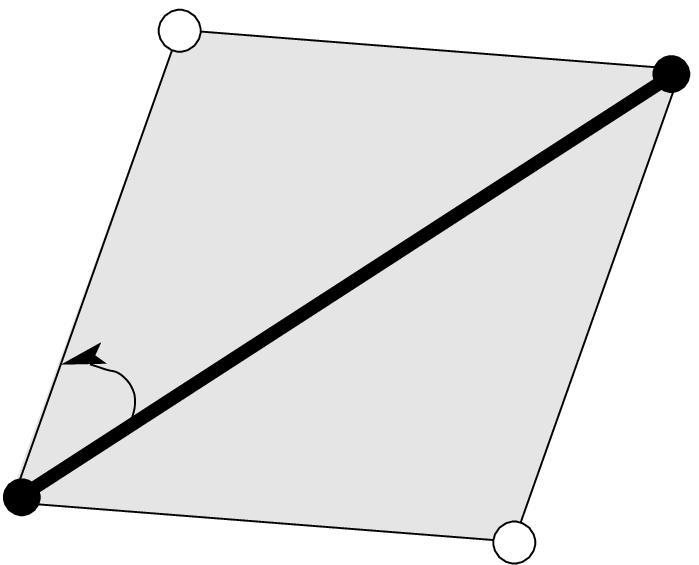,height=2cm}}
\end{figure}

Paste these rhombi together along their sides so that
extremities of diagonals are glued to extremities of diagonals. The result is a graph $G$ with edge set $E(G)=E$ embedded in a flat surface $\SI$ with so-called {\em cone-type singularities\/}
in the vertex set $V(G)$ of $G$, and in the middle of the faces of $G\subset\SI$. We shall say that $G$ is {\em isoradially embedded\/} in the flat surface $\SI$ (see Definition~\ref{def:crit}).
Using the high-temperature representation, we define the partition function for the {\em critical Ising model\/} on such a graph $G$ by
\[
Z(G,\nu)=\sum_{\gamma\in\mathcal{E}(G)}\prod_{e\in\gamma}\nu_e,
\]
where $\mathcal{E}(G)$ denotes the set of even subgraphs of $G$, and the {\em critical weights\/} $\nu_e$ are given by
$\nu_e=\tan(\theta_e/2)$. This is a natural generalization of the critical Z-invariant Ising model~\cite{Bax2}, which corresponds
to the special case where $\SI$ is a domain in the (flat) plane. Note that only a specific class of planar graphs admit a Z-invariant Ising model (see~\cite{K-S,CS}).
On the other hand, any finite graph can be isoradially embedded in a flat surface as explained above, and therefore admits a critical Ising model (Proposition~\ref{prop:real}).

Since its introduction by Baxter, the critical Z-invariant Ising model has been extensively studied, as well as its analog on the flat torus (see for example the papers~\cite{BdT1,BdT2}, where
Boutillier and de Tili\`ere make use of the Pfaffian method). On the other hand, very little is known about the critical Ising model on graphs in higher genus. It is widely believed that such models
should be discrete analogs of some conformal field theory~\cite{AMV}, but in genus $g\ge 2$, such statements are only supported by numerical experiments on very specific examples~\cite{CSM1,CSM2}.
It is our belief that, in order to try to tackle such outstanding conjectures, the Kac-Ward method will prove useful.

Before doing so, we need to settle some fundamental questions about the critical Ising model on graphs of arbitrary genus and the associated Kac-Ward matrices,
and this is exactly what the present paper is about.

In the general case of an arbitrary graph embedded in a topological surface, the (generalized) Kac-Ward matrices can be quite complicated (see Definition~1 in~\cite{Cim2}).
The first nice surprise is that for a weighted graph
$(G,x)$ embedded in a flat surface, the corresponding Kac-Ward matrices take a remarkably simple form. For each homomorphism $\varphi$ from the fundamental group $\pi_1(\SI)$ of $\SI$ to the group
$S^1$, we get a {\em $\varphi$-twisted Kac-Ward matrix\/} of order $2|E(G)|$ whose determinant we denote by $\tau^\varphi(G,x)$ (see Definition~\ref{def:KW}).
For a specific type of $\varphi$'s -- the ones that belong to the set $\S$ of
{\em discrete spin structures\/} on $G\subset\SI$ -- $\tau^\varphi(G,x)$ turns out to be the square of a polynomial in the variables $\{x_e\}_e$. The main theorem of~\cite{Cim2} then easily implies
the following result (see Theorem~\ref{thm:Arf} for the complete statement).

\begin{thmintro}
If all cone angles are odd multiple of $2\pi$, then the Ising partition function on the weighted graph $(G,x)$ is given by
\[
Z(G,x)=\frac{1}{2^g}\sum_{\lambda\in\S}(-1)^{\Arf(\lambda)}\tau^\lambda(G,x)^{1/2},
\]
where $g$ is the genus of $\SI$ and $\Arf(\lambda)\in\Z_2$ the Arf invariant of the discrete spin structure $\lambda$.
\end{thmintro}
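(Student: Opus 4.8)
The plan is to deduce this from the generalized Kac--Ward formula of~\cite{Cim2}. That theorem associates to the embedded graph $G\subset\SI$ and to each discrete spin structure $\lambda$ a Kac--Ward matrix $\mathrm{KW}_\lambda(G,x)$ --- built from the combinatorics of the embedding together with a choice of ``turning angles'' along the edges --- and asserts
\[
Z(G,x)=\frac{1}{2^g}\sum_{\lambda}(-1)^{\Arf(\lambda)}\det\big(\mathrm{KW}_\lambda(G,x)\big)^{1/2},
\]
the sum running over the $2^{2g}$ discrete spin structures on $G\subset\SI$, the square root being the polynomial square root pinned down by its constant term. Thus it suffices to prove that, when $G$ is isoradially embedded in the flat surface $\SI$, one has $\det\mathrm{KW}_\lambda(G,x)=\tau^\lambda(G,x)$ for every $\lambda$, where on the right $\lambda$ is regarded as the element of $\S\subset\Hom(\pi_1(\SI),S^1)$ it determines. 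Once this is established, the factor $1/2^g$ and the signs $(-1)^{\Arf(\lambda)}$ carry over verbatim, and the formula of the statement (and its refinement, Theorem~\ref{thm:Arf}) follows at once.

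To prove the equality of determinants, I would first unwind the definition of $\mathrm{KW}_\lambda(G,x)$ from~\cite[Def.~1]{Cim2} and compare it entry by entry with Definition~\ref{def:KW}. In~\cite{Cim2} the off-diagonal coefficients of a Kac--Ward matrix have the shape $x_e\,e^{i\alpha/2}$, where $\alpha$ is the exterior turning angle between two consecutive oriented edges; for an isoradial embedding in a flat surface these turning angles are completely determined by the half-rhombus angles $\theta_e$, which is exactly what makes the matrix of Definition~\ref{def:KW} so explicit. The $\lambda$-twist enters both constructions in the same manner --- as a phase carried along a generating set of loops --- so after choosing compatible reference data the two matrices of order $2|E(G)|$ coincide up to conjugation by a diagonal matrix, which does not affect the determinant.

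The hypothesis that all cone angles are odd multiples of $2\pi$ is precisely what makes this identification legitimate, and this is the delicate point. A priori a discrete spin structure is a homomorphism not of $\pi_1(\SI)$ but of a central $\Z_2$-extension of it, reflecting the sign picked up by a spinor under a full rotation. The cone angle at a singularity of $\SI$ is the sum of the rhombus angles around it, so requiring every cone angle to be an odd multiple of $2\pi$ is equivalent to requiring that the half-angle phases $e^{i\alpha/2}$ multiply to $-1$ around each cone point. This forces the monodromy of the relevant flat line bundle to take values in $\{\pm1\}$, which is exactly the condition under which every discrete spin structure $\lambda$ descends to an honest element of $\Hom(\pi_1(\SI),S^1)$, and hence to an admissible $\varphi$ in Definition~\ref{def:KW}. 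Granting this, $\tau^\lambda(G,x)$ is the square of a polynomial by the discussion preceding the statement, its distinguished square root matches that of $\det\mathrm{KW}_\lambda(G,x)$ by comparing constant terms, and the proof is complete.

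I expect the main obstacle to be the bookkeeping in the last two steps rather than any conceptual difficulty: one has to match the precise conventions of~\cite{Cim2} (edge orientations, the ordering and signs of the half-rhombus angles, the spanning structure used to trivialize the flat line bundle over $G$) with those of Definition~\ref{def:KW}, and verify that the sign ambiguity of the square root is resolved consistently on both sides of the formula. This is routine but is where all the care must be taken.
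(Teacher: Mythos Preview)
Your overall strategy --- reduce to the generalized Kac--Ward formula of~\cite{Cim2} --- is correct and is also what the paper does. However, the paper does not carry this out by an entry-by-entry comparison of matrices. Instead, it applies Bass' theorem to expand $\tau^\lambda(G,x)$ as an infinite product over prime reduced closed paths $\gamma$, with each factor equal to $1-\lambda(\gamma)\exp\big(\tfrac{i}{2}\alpha(\gamma)\big)x(\gamma)$. The condition $\lambda^2=\kappa$ forces the phase $\lambda(\gamma)\exp\big(\tfrac{i}{2}\alpha(\gamma)\big)$ to lie in $\{\pm1\}$, which immediately exhibits $\tau^\lambda(G,x)$ as a square (pairing $\gamma$ with $-\gamma$) and identifies the sign as the mod~2 winding number of $\gamma$ relative to the vector field $V_\lambda$ built from $\lambda$. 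The formula then follows from Proposition~\ref{prop:spin} together with \cite[Corollary~2.2]{Cim2}.

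Compared with your route, the Bass-theorem approach has the advantage that it makes the ``square of a polynomial'' claim and the match with the winding-number encoding of~\cite{Cim2} transparent in one stroke, whereas your entry-by-entry comparison leaves both of these as separate verifications. More importantly, what you describe as ``routine bookkeeping'' actually contains the substance of Proposition~\ref{prop:spin}: the cone-angle hypothesis is not merely what lets $\lambda$ descend to $\Hom(\pi_1(\SI),S^1)$ --- $\lambda$ is already a $1$-cocycle by definition --- but rather what guarantees that the vector field $V_\lambda$ has zeros of \emph{even} index, so that $\lambda$ corresponds to a genuine spin structure on $\SI$ with a well-defined Arf invariant. Your third paragraph locates the role of the hypothesis in the wrong place, and without Proposition~\ref{prop:spin} (or its content) the identification of $\Arf(\lambda)$ on the two sides would not go through.
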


Let us now turn to our main results. In a nutshell, we show that when the weight system is critical the corresponding Kac-Ward determinants $\tau^\varphi(G,\nu)$ exhibit
several remarkable properties.
 
Firstly, these determinants turn out to admit a relatively simple combinatorial interpretation, as described in Proposition~\ref{prop:tech}.
Furthermore, they satisfy the following duality property.

\begin{thmintro}
Let $G$ be a graph isoradially embedded in a flat surface $\SI$, and let $\nu$ be the critical weight system on $G$. The dual graph $G^*$ is also isoradially embedded in $\SI$, and therefore admits
a critical weight system $\nu^*$. If all cone angles are odd multiples of $2\pi$, then for any $\varphi$,
\[
2^{|V(G^*)|}\hskip-2pt\prod_{e^*\in E(G^*)}(1+\cos(\theta_{e^*}))\,\tau^\varphi(G^*,\nu^*)=2^{|V(G)|}\hskip-2pt\prod_{e\in E(G)}(1+\cos(\theta_{e}))\,\tau^\varphi(G,\nu).
\]
\end{thmintro}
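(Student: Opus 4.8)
The plan is to compute both sides of the identity directly from the explicit form of the twisted Kac–Ward matrices, using the rhombic structure shared by $G$ and $G^*$. Recall that $G$ and $G^*$ are isoradially embedded in the \emph{same} flat surface $\SI$, sharing the same rhombi $\{R_e\}_{e\in E}$: each rhombus $R_e$ has one diagonal $e\in E(G)$ and the other diagonal $e^*\in E(G^*)$, with half-angles $\theta_e$ and $\theta_{e^*}=\pi/2-\theta_e$. In particular there is a canonical bijection $E(G)\leftrightarrow E(G^*)$, and the critical weights are related by $\nu_{e^*}=\tan(\theta_{e^*}/2)=\tan(\pi/4-\theta_e/2)=(1-\nu_e)/(1+\nu_e)$, the classical Kramers–Wannier relation. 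The factor $2|V|+\sum\log(1+\cos\theta)$ appearing on each side should arise naturally when one renormalizes the Kac–Ward matrix so that its entries become symmetric in $\theta_e\leftrightarrow\theta_{e^*}$.

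First I would write down, from Definition~\ref{def:KW}, the $\varphi$-twisted Kac–Ward matrix $T^\varphi=T^\varphi(G,\nu)$ as an operator on oriented edges of $G$: its entries are products of $\nu$-weights (or functions thereof), phase factors $e^{i\alpha/2}$ recording the turning angle $\alpha$ between consecutive oriented edges, and the twist $\varphi(\cdot)$ along the homotopy class of the step. The key computational input is that, on an isoradial embedding, both the weights and the turning angles are governed by the rhombus angles: an oriented edge of $G$ and the two oriented edges of $G^*$ dual to its neighbors sit inside the same pair of adjacent rhombi, so the matrix entries of $T^\varphi(G,\nu)$ and of $T^\varphi(G^*,\nu^*)$ can be expressed in the same trigonometric variables. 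I would then exhibit an explicit \emph{conjugation and rescaling} $T^\varphi(G^*,\nu^*)=D^{-1}\,A\,T^\varphi(G,\nu)^{\sharp}\,A^{-1}D$ (or a block relation between the two, possibly via the common ``rhombic'' Dirac-type operator), where $D$ is diagonal with entries built from $(1+\cos\theta_e)$ and $A$ is a unitary permutation-type matrix encoding the diagonal-swap in each rhombus; taking determinants then yields the stated proportionality, the diagonal rescaling producing exactly the products $\prod(1+\cos\theta_e)$ and the $2^{|V|}$ factors coming from the change in the order of the matrices ($2|E(G^*)|$ versus $2|E(G)|$, together with Euler's formula $|V(G)|-|E(G)|+|F(G)|=2-2g$ and $|F(G)|=|V(G^*)|$).

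An alternative, and perhaps cleaner, route uses the combinatorial interpretation of $\tau^\varphi(G,\nu)$ from Proposition~\ref{prop:tech}: expand $\tau^\varphi$ as a weighted sum over (twisted) even subgraphs or loop configurations, and set up a \emph{measure-preserving bijection} between loop configurations on $G$ and on $G^*$ — the standard duality sending an even subgraph to the complement of its dual — checking that the $\varphi$-twist, the signs, and the critical weights all transform by exactly the claimed scalar factor on each edge. The hypothesis that all cone angles are odd multiples of $2\pi$ is what makes the relevant square-root/sign choices coherent and, via Theorem~\ref{thm:Arf}, ties $\tau^\varphi$ for $\varphi\in\S$ back to genuine Ising quantities, so I would invoke it precisely when matching signs.

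The main obstacle I anticipate is bookkeeping the phase and twist factors under the duality: the turning angles along a loop in $G$ and along the dual loop in $G^*$ differ, and one must verify that these differences, summed around each loop, reassemble into the global factor $\varphi(\text{class})$ times the per-edge scalars with no leftover holonomy — in genus $g\ge 1$ this is exactly where a naive planar argument could fail, so I would track the contributions rhombus-by-rhombus and use that the flat structure on $\SI$ is literally the same for $G$ and $G^*$, so no extra monodromy is introduced. Once that local-to-global phase cancellation is pinned down, the determinant identity follows by the linear-algebra step above.
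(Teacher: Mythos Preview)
Your two-route proposal has the right instinct in the second route but misses the actual mechanism, and the first route contains a concrete error.

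On the first route: the matrices $I-T^\varphi(G,\nu)$ and $I-T^\varphi(G^*,\nu^*)$ both have size $2|E(G)|=2|E(G^*)|$, since the rhombic bijection $e\leftrightarrow e^*$ identifies the edge sets. So the $2^{|V|}$ factors cannot come from a change of order, and there is no obvious conjugation between the two Kac--Ward matrices at all: the non-zero pattern of $T^\varphi$ is governed by which oriented edges share an endpoint, and the vertex incidences of $G$ and $G^*$ are unrelated. A direct matrix-level identity of the form you write is not how the paper proceeds, and you have not indicated what $A$ would be.

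On the second route, you correctly point to Proposition~\ref{prop:tech}, but you misremember its content: the expansion is over the set $\F(G)$ of spanning subgraphs $F$ with no tree component, weighted by $\mu(F)=\prod_{e\in F}i\tan\theta_e$ and by $\prod_{\gamma\subset\partial N(F)}(1-\varphi(\gamma))$, not over even subgraphs. The paper's proof then uses three ingredients you have not identified. First, the \emph{complement} bijection $\psi(F)=\{e^*:e\notin F\}$ between subgraphs of $G$ and subgraphs of $G^*$, which restricts to a bijection between the relevant index sets. Second, the weight relation $\mu_{e^*}=-\mu_e^{-1}$ (from $\theta_{e^*}=\pi/2-\theta_e$), which turns $\mu(F)$ into $\mu^*(\psi(F))$ up to a controlled sign and a global factor $\prod_e\mu_e$. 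Third, and this is the geometric heart that your ``local-to-global phase cancellation'' does not capture: for any spanning $F$, the surface decomposes as $\SI=N(F)\cup N(\psi(F))$ glued along their common boundary, so $\partial N(F)=-\partial N(\psi(F))$ as oriented curves; hence the factors $\prod_\gamma(1-\varphi(\gamma))$ for $F$ and for $\psi(F)$ agree up to a sign $(-1)^{|\partial N(F)|}$. The $2^{|V|}$ factors and the products $\prod(1+\cos\theta_e)$ come from the explicit constant $C$ in Proposition~\ref{prop:tech}, and the final verification that all stray signs cancel uses Gauss--Bonnet. Your sketch does not surface any of these three points, and in particular without the tubular-neighborhood decomposition $\SI=N(F)\cup N(\psi(F))$ there is no way to match the $\varphi$-dependent factors on the two sides.
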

This can be interpreted as a generalization of the celebrated Kramers-Wannier duality~\cite{K-W} from the case of planar graphs to the case of graphs of arbitrary genus.

Our final result relates the Kac-Ward matrices to some {\em a priori\/} totally different operator. Given a weighted graph $(G,x)$ and a homomorphism $\varphi\colon\pi_1(G)\to S^1$, the
associated {\em discrete Laplacian\/} on $G$ is the operator $\Delta^\varphi=\Delta^\varphi(G,x)$ acting on $f\in\C^{V(G)}$ by
\[
(\Delta^\varphi f)(v)=\sum_{e=(v,w)}x_e\,\left(f(v)-f(w)\varphi(e)\right),
\]
the sum being over all oriented edges $e$ of the form $(v,w)$. Note that if $G$ is a (planar) isoradial graph, then the corresponding critical weights are given by $c_e=\tan(\theta_e)$
(see~\cite{Ken}). We prove:

\begin{thmintro}
Let $G\subset\Sigma$ be an isoradially embedded graph in a flat surface, and let us assume that all cone angles $\vartheta_v$
of singularities $v\in V(G)$ are odd multiples of $2\pi$. If the genus of $\SI$ is $0$ or $1$, then for any $\varphi\colon\pi_1(\SI)\to S^1$,
\[
\tau^\varphi(G,\nu)=(-1)^N\,2^{-\chi(G)}\prod_{e\in E(G)}\frac{\cos(\theta_e)}{1+\cos(\theta_{e})}\,\det\Delta^\varphi(G,c),
\]
where $N$ is the number of vertices $v\in V(G)$ such that $\vartheta_v/2\pi$ is congruent to $3$ modulo $4$, and $\chi(G)=|V(G)|-|E(G)|$.
On the other hand, the functions $\tau^\varphi(G,\nu)$ and $\det\Delta^\varphi(G,c)$ are never proportional if the genus of $\SI$ is greater or equal to two.
\end{thmintro}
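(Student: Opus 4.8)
The plan is to prove the two assertions separately, both resting on the combinatorial interpretation of $\tau^\varphi(G,\nu)$ recorded in Proposition~\ref{prop:tech}. For the proportionality when $g\le 1$, I would expand each side as a weighted sum over configurations on $G$ and compare them. On one side, Proposition~\ref{prop:tech} presents $\tau^\varphi(G,\nu)$ as a signed, $\varphi$-twisted sum over families of edge-disjoint non-backtracking closed walks in $G$, with weights $\prod_e\nu_e$ and signs governed by the turning numbers of the walks in the flat surface $\SI$. On the other side, the twisted (vector-bundle) matrix-tree theorem writes $\det\Delta^\varphi(G,c)$, for $\varphi$ nontrivial, as a sum over cycle-rooted spanning forests $F$ of $\prod_{e\in F}c_e$ times the product of $2-\varphi(\gamma)-\varphi(\gamma)^{-1}$ over the cyclic components $\gamma$ of $F$ (the trivial $\varphi$ is easily handled separately). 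The core manipulation is a resummation: one groups the cycle-rooted spanning forests by their family of vertex-disjoint cycles, sums out the rooted trees hanging off those cycles, and identifies the resulting effective weight of a cycle family with the Kac-Ward weight of the corresponding walk configuration. The edge-weight conversion is furnished by the isoradial identity $\nu_e=\tan(\theta_e/2)=\frac{\cos\theta_e}{1+\cos\theta_e}\tan\theta_e=\frac{\cos\theta_e}{1+\cos\theta_e}\,c_e$, which accounts for the factor $\prod_e\frac{\cos\theta_e}{1+\cos\theta_e}$, while the local normalizations at the $|V(G)|$ vertices together with an Euler-characteristic count — the same bookkeeping that makes $2^{|V(G)|}\prod_e(1+\cos\theta_e)\,\tau^\varphi$ the natural quantity appearing in the duality theorem — produce the factor $2^{-\chi(G)}=2^{|E(G)|-|V(G)|}$.

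I expect the sign $(-1)^N$ and the role of the hypothesis $g\le 1$ to be the hard parts. The sign must be extracted from a purely local computation at each cone point $v$ of angle $\vartheta_v=2\pi m_v$ with $m_v$ odd: one compares the cumulative turning-angle phase that the Kac-Ward determinant attaches to walks passing through $v$ with the trivial local contribution on the Laplacian side, and verifies that the net effect is a factor $-1$ exactly when $m_v\equiv 3\pmod 4$ and $+1$ when $m_v\equiv 1\pmod 4$, so that the global sign is $(-1)^N$ with $N=\#\{v\in V(G):\vartheta_v/2\pi\equiv 3\pmod 4\}$. The restriction $g\le 1$ enters precisely in the resummation: reorganizing the sum over walk configurations as a sum over families of vertex-disjoint simple cycles leaves no correction terms only when the genus is $0$ or $1$, the putative remainder being indexed by the homological data that distinguishes general walk configurations from disjoint-cycle families, and one must check that this remainder is forced to vanish for $g\le1$ (the generic $g=0$ situation being, moreover, the classical planar Kac-Ward/Laplacian correspondence).

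For $g\ge2$ I would establish non-proportionality by confronting the two expansions directly: they no longer agree up to a $\varphi$-independent constant, and one isolates an explicit discrepancy — for instance a monomial occurring in the character-ring expansion of one but not the other on $\Hom(\pi_1(\SI),S^1)\cong(S^1)^{2g}$, or differing orders of vanishing along a generic arc through the trivial character. As a consistency check, observe that if $\tau^\varphi(G,\nu)$ were proportional to $\det\Delta^\varphi(G,c)$ for $G$ and for its dual $G^*$, then the duality theorem would force $\det\Delta^\varphi(G,c)$ to be proportional to $\det\Delta^\varphi(G^*,c^*)$ — an electrical self-duality of the critical Laplacian which holds for $g\le1$ but fails once $g\ge2$. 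Making the genus-$\ge2$ discrepancy precise, together with pinning down the cone-point sign $(-1)^N$, is where the real work lies.
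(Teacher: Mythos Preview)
Your plan misreads Proposition~\ref{prop:tech}, and this misreading propagates through the whole proposal. The proposition does \emph{not} present $\tau^\varphi(G,\nu)$ as a sum over families of non-backtracking closed walks with weights $\prod_e\nu_e$; that is the loop expansion coming from Bass' theorem. What Proposition~\ref{prop:tech} actually says is
\[
\tau^\varphi(G,\nu)=C\sum_{F\in\F(G)}\Big(\prod_{\gamma\subset\partial N(F)}(1-\varphi(\gamma))\Big)\,\mu(F),
\]
a sum over spanning subgraphs $F$ with no tree component, weighted by $\mu(F)=\prod_{e\in F}i\tan(\theta_e)=\prod_{e\in F}ic_e$, with a completely explicit constant $C$. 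The Laplacian side, via Forman's formula (Proposition~\ref{prop:For}), is likewise a sum over spanning subgraphs, namely over $\F_1(G)$ (each component has a unique cycle), with the coefficient $\prod_T(2-\varphi(T)-\varphi(T)^{-1})$ and weight $c(F)$. So both sides are already sums over subgraphs with essentially the same edge weights; there is no ``summing out of rooted trees'' to be done, and the identity you write relating $\nu_e$ to $c_e$ is irrelevant here because $\nu$ never appears on the right of Proposition~\ref{prop:tech}.

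The genuine content you are missing is a topological one: for $g\le 1$, every $F\in\F(G)\setminus\F_1(G)$ has vanishing coefficient $\prod_{\gamma\subset\partial N(F)}(1-\varphi(\gamma))$. The argument is that this coefficient depends only on the homotopy type of each component $T$, which one may take to be a wedge of $n=1+|E(T)|-|V(T)|$ circles; if $n\ge 2g$ then either the cycles are dependent in $H_1(\SI)$ (so some $\gamma\subset\partial N(T)$ is null-homologous and the factor vanishes) or, when $n=2g$ and they are independent, an Euler-characteristic count shows $\partial N(T)$ is connected, hence bounds, and again the factor vanishes. For $g=1$ this forces $n=1$, i.e.\ $F\in\F_1(G)$, and then $\partial N(T)$ has exactly two components homologous to $\pm T$, giving $(1-\varphi(T))(1-\varphi(T)^{-1})=2-\varphi(T)-\varphi(T)^{-1}$, which is precisely Forman's coefficient. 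The sign $(-1)^N$ is not a ``hard part'': it drops out mechanically from the constant $C$ of Proposition~\ref{prop:tech} together with $\mu_e=ic_e$, since $(-1)^{|V|}\prod_v\exp(i\vartheta_v/4)\cdot i^{|V|}=(-1)^N$ when each $\vartheta_v/2\pi$ is odd.

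For $g\ge 2$ the paper's route is again more concrete than yours: one exhibits an $F\in\F(G)\setminus\F_1(G)$ with nonzero coefficient by taking a spanning tree and adding two edges whose associated cycles are independent in $H_1(\SI)$ with zero intersection number (possible precisely because $g\ge 2$); then $\partial N(T)$ has three components, none null-homologous, so the coefficient is nonzero for generic $\varphi$. Your proposed character-ring or duality-based arguments could perhaps be made to work, but they are more circuitous than this direct construction.
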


Via the method developed in~\cite[Subsection 4.3]{Cim2}, this theorem can be interpreted as a wide-reaching generalization of the main result of \cite{BdT1,BdT2} which was obtained via
the Pfaffian method. In our opinion, this is a good example of how simpler and more natural a proof can get, when the Kac-Ward method is used instead of the Pfaffian one.
(See Remark~\ref{rem:BdT} below for a more detailed comparison.)
In the case of the flat torus, the theorem above implies a relation between the free energy of the critical Z-invariant Ising model on $G$
and the normalized determinant of the critical discrete Laplacian on $G$. The former quantity was computed by Baxter~\cite{Bax2}, the latter by Kenyon~\cite{Ken}, and our equality allows to obtain
any of these two results as a corollary of the other one.

\medskip

The paper is organized as follows. In Section~\ref{sec:def}, we define our model: the Ising model on graphs isoradially embedded in flat surfaces (Definition~\ref{def:crit}), with critical weights
(Definition~\ref{def:nu}). In Section~\ref{sec:KW}, we introduce the $\varphi$-twisted Kac-Ward matrices for graphs in flat surfaces (Definition~\ref{def:KW}), and we show how they can be used
to compute the Ising partition function (Theorem~\ref{thm:Arf}). Section~\ref{sec:crit} deals with the case of isoradially embedded graphs with critical weights, and contains our main results.
We start with the combinatorial interpretation for the Kac-Ward determinants with critical weights (Proposition~\ref{prop:tech}). Then, we prove the equality relating the Kac-Ward determinants of
dual isoradially embedded graphs (Theorem~\ref{thm:duality}). Also, we relate the Kac-Ward determinants with the determinant of the critical discrete Laplacian on $G$ (Theorem~\ref{thm:Delta}).
In a last paragraph, we explain how the Kac-Ward determinant can be understood as a discrete version of the $\overline\partial$-torsion of the underlying Riemann surface (Subsection~\ref{sub:RS}).

\subsection*{Acknowledgments}
This research was supported by the European Research Council AG CONFRA and by the Swiss NSF. Part of this paper was done at the ETH in Zurich, and it is a pleasure to thank the Mathematics
Department of the ETHZ for providing such an excellent working environment. The author also wishes to thank C\'edric Boutillier and Hugo Duminil-Copin for comments on an earlier
version of the manuscript, and Martin Loebl for valuable discussions.

%%%%%%%%%%%%%%%%%%%%%%%%%%%%%%%%%%%

\section{The critical Ising model on isoradial graphs}
\label{sec:def}

The aim of this section is to explain the setup of the model that we will be studying in this paper: the critical Ising model on graphs isoradially embedded in a flat surface
(Definitions~\ref{def:crit} and~\ref{def:nu}). To motivate this definition, we start by recalling what is meant by high and low-temperature expansions for the Ising model, leading to
the Kramers-Wannier duality argument (Subsection~\ref{sub:KW}). The models on which such an argument can be applied are called Z-invariant Ising models (Subsection~\ref{sub:Z}).
They are all defined on planar (or toric) graphs, but a generalization of these models to surfaces of arbitrary genus then naturally leads to our definition (Subsection~\ref{sub:flat}).

\subsection{Kramers-Wannier duality}
\label{sub:KW}

Let $G$ be a finite graph with vertex set $V(G)$ and edge set $E(G)$. A {\em spin configuration\/} on $G$ is a map $\sigma\colon V(G)\to\{-1,+1\}$. Any positive edge weight system
$J=(J_e)_{e\in E(G)}$ on $G$ determines a probability measure on the set $\Omega(G)$ of such spin configurations by
\[
P(\sigma)=\frac{1}{Z^J(G)}\exp\Big(\sum_{e=(u,v)\in E(G)}J_e\sigma_u\sigma_v\Big),
\]
where
\[
Z^J(G)=\sum_{\sigma\in\Omega(G)}\exp\Big(\sum_{e=(u,v)\in E(G)}J_e\sigma_u\sigma_v\Big)
\]
is the {\em partition function\/} of the {\em Ising model on $G$ with coupling constants $J$\/}.

As observed by van der Waerden \cite{vdW}, the identity
\[
\exp(J_e\sigma_u\sigma_v)=\cosh(J_e)(1+\tanh(J_e)\sigma_u\sigma_v)
\]
allows to express this partition function as
\begin{align*}
Z^J(G)&=\Big(\prod_{e\in E(G)}\cosh(J_e)\Big)\sum_{\sigma\in\Omega(G)}\prod_{e=(u,v)\in E(G)}(1+\tanh(J_e)\sigma_u\sigma_v)\\
&=\Big(\prod_{e\in E(G)}\cosh(J_e)\Big)2^{|V(G)|}\sum_{\gamma\in\mathcal{E}(G)}\prod_{e\in\gamma}\tanh(J_e),\\
\end{align*}
where $\mathcal{E}(G)$ denotes the set of even subgraphs of $G$, that is, the set of subgraphs $\gamma$ of $G$ such that every vertex of $G$ is adjacent to an even number of edges of $\gamma$.
This is called the {\em high-temperature expansion\/} of the partition function.

Let us now assume that the graph $G$ is planar, and let $G^*$ denote its dual graph. For any spin configuration $\sigma\in\Omega(G)$, consider the subgraph of $G^*$ given by
all edges $e^*\in E(G^*)$ dual to $e=(u,v)$ with $\sigma_u\neq\sigma_v$. Clearly, this is an even subgraph of $G^*$. Furthermore, since $G$ is planar, this defines a surjective map
$\Omega(G)\to\mathcal{E}(G^*)$
such that $\sigma$ and $\sigma'$ have same image if and only if $\sigma'=-\sigma$. This leads to the following {\em low-temperature expansion\/} of the Ising partition function:
\begin{align*}
Z^J(G)&=2\sum_{\gamma^*\in\mathcal{E}(G^*)}\prod_{e\in E(G)}\exp(J_e)\prod_{e*\in\gamma^*}\exp(-2J_e)\\
&=2\Big(\prod_{e\in E(G)}\exp(J_e)\Big)\sum_{\gamma^*\in\mathcal{E}(G^*)}\prod_{e*\in\gamma^*}\exp(-2J_e).\\
\end{align*}
Thus, if we assign weights $J$ to $E(G)$ and $J^*$ to $E(G^*)$ in such a way that $\tanh(J_{e^*})=\exp(-2J_e)$, or more symmetrically,
\[
\sinh(2J_e)\sinh(2J_{e^*})=1,
\]
we obtain that the partition functions $Z^J(G)$ and $Z^{J^*}(G^*)$ are proportional to each other.

In the case of the square lattice with constant weight system $J$, this is enough to determine the critical value $J_c$ of the coupling constant.
Indeed, assuming that the free energy of the model is analytic everywhere except at a single point, this point must be equal to $J_c$ {\em and\/} to $J_c^*$, since the square lattice is self-dual.
The equality $J_c=J^*_c$ then leads to the explicit value $J_c=\log\sqrt{1+\sqrt{2}}$. 

\begin{figure}[Htb]
\labellist\small\hair 2.5pt
\pinlabel {$v$} at 145 100
\endlabellist
\centerline{\psfig{file=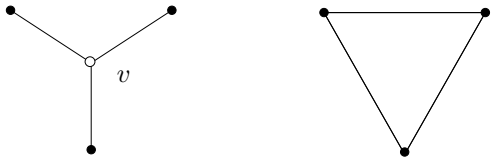,height=2cm}}
\caption{The star-triangle transformation at a vertex $v$.}
\label{fig:star}
\end{figure}

\subsection{The Z-invariant Ising models}
\label{sub:Z}

This beautifully simple argument, the celebrated {\em Kramers-Wannier duality\/} \cite{K-W}, is not sufficient to determine the critical value of the coupling constant on a graph that is not self-dual:
it only relates this critical value to the one for the dual graph. However, for some planar graphs, this can be obtained ``with little additional labor" \cite{Wan}.

Let us start with the example of the hexagonal lattice $H$, and let $H'$ denote the graph obtained from $H$ by a {\em star-triangle transformation\/} at a vertex $v$ as illustrated in
Figure~\ref{fig:star}. Assume one can assign coupling constants $J$ to the edges of $H$ and $J'$ to the newly created edges of $H'$ in such a  way that $Z^{J'}(H')=R_v\,Z^J(H)$
for some function $R_v$ of the coupling constants of the edges around $v$. Since the triangular lattice $T$ can be obtained from $H$ by such transformations, it would follow that
$Z^K(T)=R\,Z^J(H)$ for some controlled $R$ and well-chosen coupling constants $K,J$. This, together with the Kramers-Wannier duality, would lead to an equality of the form $Z^K(T)=k\,Z^{K^*}(T)$,
with $K\mapsto K^*$ some involution and $k$ an explicit function of $K$ and $K^*$. Arguing as above, the critical points should be self-dual under this involution, leading to the equality
$k=1$ and an exact description of these critical points for the triangular and hexagonal lattices. 

This strategy of using invariance under the star-triangle transformation (or, {\em Z-invariance\/}) can be applied not only to the hexagonal-triangular lattices, but to a wide class of planar
graphs~\cite{Bax,Bax2}. It turns out that this class of graphs on which a Z-invariant Ising model can be defined coincides with the graphs that admit an {\em isoradial embedding\/}
in the plane \cite{K-S,CS}: this is an embedding such that each face is inscribed in a circle of radius one, with the circumcenter in the closure of the face. Furthermore, the corresponding
critical coupling constants admit a very simple geometric description: they are given by
\[
J_e=\frac{1}{2}\log\left(\frac{1+\sin\theta_e}{\cos\theta_e}\right),
\]
where $\theta_e\in(0,\pi/2)$ is the half-rhombus angle associated to the edge $e$, as illustrated in Figure~\ref{fig:theta}.
For example, the square lattice is isoradially embedded with all half-rhombus angles equal to $\theta=\pi/4$, leading to the critical coupling constant $J_c=\log\sqrt{1+\sqrt{2}}$ as above.
On the other hand, the triangular and hexagonal lattices are isoradially embedded with angles $\theta=\pi/6$ (resp. $\pi/3$), so the corresponding critical values are equal to
$J_c=\log\sqrt{\sqrt{3}}$ (resp. $\log\sqrt{2+\sqrt{3}}$).

This geometric description of the critical coupling constants becomes even nicer when using the high-temperature expansion, as we obtain the {\em critical weights\/}
\[
\nu_e=\tanh(J_e)=\tanh\left(\frac{1}{2}\log\left(\frac{1+\sin\theta_e}{\cos\theta_e}\right)\right)=\frac{1-\cos\theta_e}{\sin\theta_e}=\tan(\theta_e/2).
\]

\begin{figure}[Htb]
\labellist\small\hair 2.5pt
\pinlabel {$e$} at 180 135
\pinlabel {$\theta_e$} at 115 140
\endlabellist
\centerline{\psfig{file=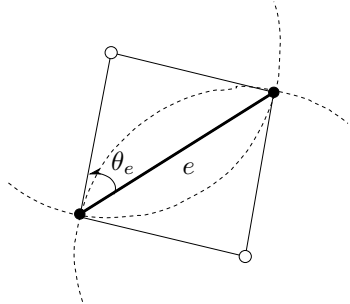,height=4cm}}
\caption{An edge $e$ of an isoradial graph, the associated rhombus, and the half-rhombus angle $\theta_e$.}
\label{fig:theta}
\end{figure}

\subsection{Isoradial graphs in flat surfaces}
\label{sub:flat}

In the present paper, we will study more general models where the graph $G$ is not assumed to be planar. The proper generalization of planar isoradiality is obtained by considering so-called
flat surfaces with cone-type singularities. Let us quickly recall their definition and main properties, referring to \cite{Tro} for further details.

Given a positive real number $\vartheta$, the space
\[
C_\vartheta=\{(r,t)\,:\,\hbox{$r\ge 0$, $t\in\R/\vartheta\Z$}\}/(0,t)\sim(0,t')
\]
endowed with the metric $\mathit{ds}^2=\mathit{dr}^2+r^2\mathit{dt}^2$ is called the
{\em standard cone of angle $\vartheta$\/}. Note that the cone without its tip is locally isometric to the Euclidean plane.
Let $\Sigma$ be a surface with a discrete subset $S$. A {\em flat metric on $\Sigma$ with cone-type singularities} of angles
$\{\vartheta_x\}_{x\in S}$ supported at $S$ is an atlas $\{\phi_x\colon U_x\to U'_x\subset C_{\vartheta_x}\}_{x\in S}$, where $U_x$ is an open neighborhood of $x\in S$, $\phi_x$ maps
$x$ to the tip of the cone $C_{\vartheta_x}$, and the transition maps are Euclidean isometries.

This seemingly technical definition should not hide the fact that these objects are extremely simple and natural: any such flat surface can be obtained by gluing polygons embedded
in $\R^2$ along pairs of sides of equal length. For example, a rectangle with opposite sides identified will define a flat torus with no singularity. On the other hand,
a regular $4g$-gon with opposite sides identified gives a flat surface of genus $g$ with a single singularity of angle $2\pi(2g-1)$. In general, the topology of the surface
is related to the cone angles by the following Gauss-Bonnet Formula: if $\SI$ is a closed flat surface with cone angles $\{\vartheta_x\}_{x\in S}$, then
\[
\sum_{x\in S}(2\pi-\vartheta_x)=2\pi\chi(\Sigma),
\]
where $\chi(\SI)$ is the Euler characteristic of $\SI$.

\begin{definition}
\label{def:crit}
A graph $G$ is {\em isoradially embedded in a flat surface $\Sigma$\/} if the following conditions are satisfied:
\begin{ticklist}
\item{$\Sigma$ is a compact orientable flat surface with cone-type singularities;}
\item{each edge of $G$ is a straight line in $\SI$;}
\item{each closed face $f$ of $G\subset\Sigma$ contains an element $x_f$ at distance $1$ from all vertices of $\partial f$;}
\item{a singularity of $\Sigma$ is either a vertex of $G$ or a vertex $x_f$ of the dual graph $G^*$, that is, the singular set is contained in $V(G)\cup V(G^*)$.}
\end{ticklist}
\end{definition}

Given an isoradially embedded graph $G\subset\Sigma$, each edge $e\in E(G)$ has an associated rhombus as illustrated in Figure~\ref{fig:theta}.
Therefore, the metric space $\Sigma$ should simply be understood as rhombi pasted together along their boundary edges.
This observation also leads to the following fact.

\begin{proposition}
\label{prop:real}
Any finite graph $G$ can be isoradially embedded in a flat surface.
\end{proposition}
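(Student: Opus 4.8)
I would argue exactly along the lines suggested by the paragraph preceding the statement: an isoradially embedded graph is nothing but a family of unit rhombi pasted along their sides, and the key observation is that once cone-type singularities are allowed there is \emph{no} constraint left on the rhombus angles --- the planar conditions forcing the angles around each vertex to add up to $2\pi$ simply evaporate. So I would take \emph{all} the rhombi to be congruent.

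First I would make a harmless reduction: it suffices to treat the case in which $G$ is connected, has at least one edge, and has no loops (a loop may be barycentrically subdivided, and it plays no part in the Ising model anyway; a disconnected graph is handled one component at a time, a disjoint union of flat surfaces being again a flat surface). Next I would fix a \emph{rotation system} on $G$, i.e.\ a cyclic ordering of the half-edges incident to each vertex. As is classical, this determines the combinatorial \emph{faces} of $G$ --- the orbits of the face-tracing permutation --- and in particular each edge $e$, with endpoints $v_1(e),v_2(e)$, borders exactly two such faces $f_1(e),f_2(e)$, possibly equal.

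Then I would build $\SI$ directly from rhombi. To each $e\in E(G)$ I attach a unit square $R_e$ --- that is, a rhombus of unit side length with half-rhombus angle $\theta_e=\pi/4$ (any fixed common value in $(0,\pi/2)$ would serve) --- whose four corners are cyclically labelled $v_1(e),x_{f_1(e)},v_2(e),x_{f_2(e)}$, the diagonal through the two $v$-labelled corners being declared to be the edge $e$. I glue the $R_e$ edge-to-edge according to the rotation system: around each vertex $v$ of $G$ the squares $\{R_e:v\in e\}$ are glued in a cycle following the cyclic order of the half-edges at $v$, and dually around each face $f$ the squares incident to $f$ are glued in a cycle following the boundary order of $f$. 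Write $\SI$ for the resulting space and $x_f$ for the point at which the squares around $f$ come together. Then $\SI$ is a compact orientable surface --- orientability being inherited from the rotation system --- carrying a flat metric whose only possible cone points are the $v$'s and the $x_f$'s; this is condition~(i) of Definition~\ref{def:crit}. Each edge $e$ is then the straight segment provided by the chosen diagonal of $R_e$, giving~(ii). The closed face of $G\subset\SI$ indexed by $f$ is the union, over the edges $e$ incident to $f$, of the half-square that $e$ cuts off on the $x_f$-side: a disk in which $x_f$ lies at distance $1$ from each of its boundary vertices (the relevant radial square side realising the distance), giving~(iii). And the $x_f$'s are exactly the vertices of the dual graph $G^*$, so the singular set of $\SI$ is contained in $V(G)\cup V(G^*)$, giving~(iv). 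Hence $G$ is isoradially embedded in $\SI$.

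The one step that calls for genuine care --- the main, though modest, obstacle --- is checking that the edge-to-edge gluing is consistent and yields an honest \emph{closed} surface: one has to verify that the squares close up exactly once around each $v$ and around each $x_f$, i.e.\ that the vertex links are circles. This uses the compatibility of the two gluing prescriptions (around vertices and around faces of $G$) and is precisely the classical fact that a rotation system defines a cellular embedding, read off in the rhombic (``quad-graph'') picture; one may equally phrase it by first invoking the $2$-cell embedding of $G$ into a closed orientable surface determined by the rotation system and then replacing that surface by the rhombic one. Once this is in place, the remaining verifications --- flatness away from $V(G)\cup V(G^*)$, the circumradius-$1$ property for the geodesic distance, and orientability --- are routine.
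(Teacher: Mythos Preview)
Your proof is correct and follows the same construction as the paper: fix a common rhombus angle, choose a cyclic ordering of half-edges at each vertex, and glue; the paper phrases this via isosceles triangles (half-rhombi) assembled first into stars $\mathit{St}(v)$ and then glued star-to-star, but the resulting flat surface is the same as your quad-graph. One minor point: barycentrically subdividing a loop changes the graph $G$, so that reduction is not quite licit as stated --- but it is also unnecessary, since both your rhombus gluing and the paper's star gluing accommodate loops directly (the paper simply identifies the two relevant boundary arcs of $\mathit{St}(v)$).
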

\begin{proof}
Fix an arbitrary angle $\theta\in(0,\pi/2)$ and associate to each half-edge $\tilde{e}$ of $G$ the isosceles triangle illustrated below.
\begin{figure}[h]
\labellist\small\hair 2.5pt
\pinlabel {$\tilde{e}$} at 175 60
\pinlabel {$1$} at 320 110
\pinlabel {$1$} at 80 110
\pinlabel {$\theta$} at 175 110
\pinlabel {$\theta$} at 220 110
\endlabellist
\centerline{\psfig{file=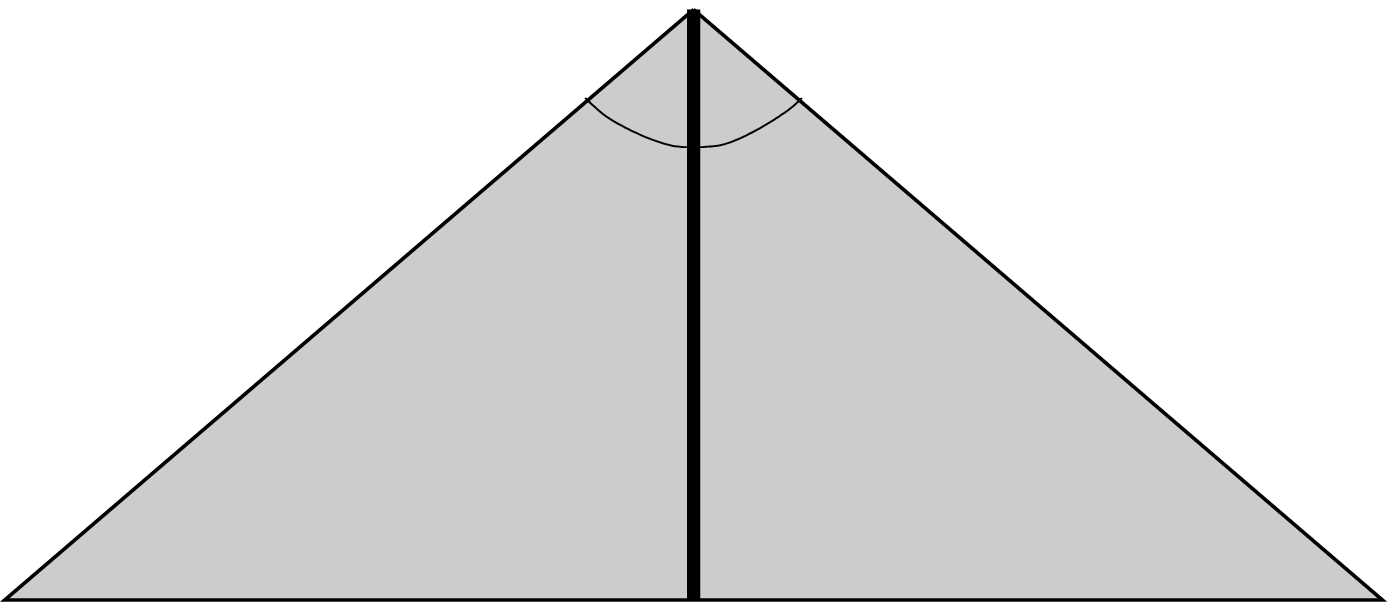,height=1.5cm}}
\end{figure}

For each vertex $v\in V(G)$, choose a cyclic ordering of the $d_v$ half-edges of
$G$ coming out of $v$ and build the associated star $\mathit{St}(v)$ as follows: glue together the corresponding $d_v$ triangles following the chosen cyclic ordering around $v$.
Note that $\mathit{St}(v)$ is a flat surface with one singularity of cone angle $2\theta d_v$, and that the cyclic ordering endows this surface with an orientation. For each edge $e=(v,w)\in E(G)$,
glue together the stars $\mathit{St}(v)$ and $\mathit{St}(w)$ along their boundary by pasting together the two triangles associated to $e$ in the unique way that is consistent with the
orientation of the stars. (If $e$ is a loop at $v$, just glue together the corresponding sides of the star $\mathit{St}(v)$.) By construction, $G$ is isoradially embedded in the resulting
metric space $\SI$, which is a compact oriented flat surface.
\end{proof}

Following the discussion of the previous subsections, we shall adopt the following terminology.

\begin{definition}
\label{def:nu}
Let $G$ be a graph isoradially embedded in a flat surface $\SI$.
The {\em critical weight\/} associated to the edge $e\in E(G)$ is defined by
\[
\nu_e=\tan(\theta_e/2),
\]
where $\theta_e\in(0,\pi/2)$ is the half-rhombus angle associated to the edge $e$.
The partition function for the {\em critical Ising model\/} on $G$ is given by
\[
Z(G,\nu)=\sum_{\gamma\in\mathcal{E}(G)}\prod_{e\in\gamma}\nu_e,
\]
where $\mathcal{E}(G)$ denotes the set of even subgraphs of $G$.
\end{definition}

%%%%%%%%%%%%%%%%%%%%%%%%%%%%%%%%%%%

\section{The Kac-Ward formula for graphs in flat surfaces}
\label{sec:KW}

In \cite{Cim2}, we gave a generalized Kac-Ward formula for the Ising partition function on any finite weighted graph $(G,x)$.
The aim of this section is to show that when the graph $G$ is embedded in a flat surface, the generalized Kac-Ward matrices take a particularly simple form --
whatever the weight system $x$ on $G$ is. In the next section, we will consider the case of isoradial graphs with critical weights.

\subsection{Kac-Ward matrices for graphs in flat surfaces}

Let us start with some general terminology and notation. Given a weighted graph $(G,x)$, let ${\EE}={\EE}(G)$ be the set of oriented edges of $G$.
Following~\cite{Ser}, we shall denote by $o(e)$ the origin of an oriented edge $e\in\EE$, by $t(e)$ its terminus, and by $\bar{e}$ the same edge with the opposite orientation.
By abuse of notation, we shall write $x_e=x_{\bar{e}}$ for the weight associated to the unoriented edge
corresponding to $e$ and $\bar{e}$.

Now, assume that $G$ is embedded in an orientable flat surface $\SI$ so that each edge of $G$ is a straight line, $\SI\setminus G$ consists of topological discs, and
the set $S$ of cone-type singularities is contained in $V(G)\cup V(G^*)$. As above, let $\vartheta_x$ denote the cone angle of the singularity $x\in S$.
Fix a unitary character $\varphi$ of the fundamental group of $\SI$, that is, an element of
\[
\Hom(\pi_1(\SI),U(1))=\Hom(\pi_1(\SI),S^1)=\Hom(H_1(\SI),S^1)=H^1(\Sigma;S^1).
\]
As $G\subset\SI$ induces a cellular decomposition of $\SI$, one can represent such a cohomology class by a cellular 1-cocycle, that we shall also denote by $\varphi$.
This is nothing but a map from the set ${\EE}$ of oriented edges of $G$ into $S^1$, such that $\varphi(\bar{e})=\overline{\varphi(e)}$ and $\varphi(\partial f)=\prod_{e\in\partial f}\varphi(e)=1$
for each face $f$ of $G\subset\SI$.

\begin{definition}
\label{def:KW}
Let $T^\varphi$ denote the $|{\EE}|\times|{\EE}|$ matrix defined by
\[
T^\varphi_{e,e'}=
\begin{cases}
\varphi(e)\,i\exp\left(-\frac{i}{2}\beta(e',\bar{e})\right)\,x_e& \text{if $t(e)=o(e')$ but $e'\neq \bar{e}$;} \\
0 & \text{otherwise,}
\end{cases}
\]
where $\beta(e',\bar{e})\in (0,\vartheta_v)$ denotes the angle from $e'$ to $\bar{e}$, as illustrated in the left part of Figure~\ref{fig:alpha}.
We shall call the matrix $I-T^\varphi$ the {\em $\varphi$-twisted Kac-Ward matrix\/} associated to the weighted graph $(G,x)$, and denote its determinant
by $\tau^\varphi(G,x)=\det(I-T^\varphi)$.
\end{definition}

\begin{figure}[Htb]
\labellist\small\hair 2.5pt
\pinlabel {$e$} at 600 110
\pinlabel {$\alpha(e,e')$} at 840 100
\pinlabel {$e'$} at 780 190
\pinlabel {$v$} at 170 70
\pinlabel {$v$} at 690 70
\pinlabel {$\beta(e',\bar{e})$} at 180 200
\pinlabel {$e'$} at 310 140
\pinlabel {$\bar{e}$} at 40 130
\endlabellist
\centerline{\psfig{file=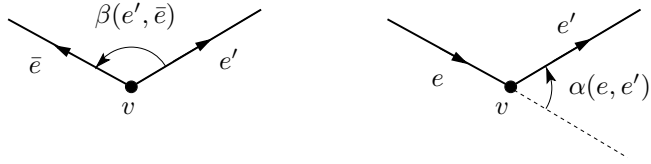,height=2cm}}
\caption{The angles $\beta(e',\bar{e})$ and $\alpha(e,e')=\pi-\beta(e',\bar{e})$.}
\label{fig:alpha}
\end{figure}

\begin{remark}
\label{rem:alpha}
If $t(e)=o(e')=v$ is not a singularity (that is, if the cone angle $\vartheta_v$ is equal to $2\pi$), then the complex number $i\exp\left(-\frac{i}{2}\beta(e',\bar{e})\right)$ is equal to
$\exp\left(\frac{i}{2}\alpha(e,e')\right)$, with $\alpha(e,e')\in (-\pi,\pi)$ the angle from $e$ to $e'$ illustrated in the right part of Figure~\ref{fig:alpha}. If $v$ does belong to $S$,
then the dotted line drawn there does not make sense anymore, hence the necessity to adopt this slightly less intuitive definition.
\end{remark}

\subsection{Discrete spin structures on flat surfaces}

The Kac-Ward matrices are particularly useful when $\varphi$ is a special type of 1-cocycle, namely a discrete spin structure.
In this paragraph, we shall recall the definition and main properties of these objects, slightly generalizing Section 3.1 of~\cite{Cim1}.

Loosely speaking, a {\em spin structure\/} on an oriented surface $\SI$ is a way to count parity of rotation numbers for closed curves in $\SI$. In the plane, there is a unique way to do so, and
therefore a unique spin structure. On a closed orientable surface of genus $g$ however, there are exactly $2^{2g}$ distinct spin structures.
More precisely, one can identify the set $\S(\SI)$ of spin structures on $\SI$ with the set of {\em quadratic forms\/} on $\SI$~\cite{Joh}: these are the maps $q\colon H_1(\SI;\Z_2)\to\Z_2$
such that $q(x+y)=q(x)+q(y)+x\cdot y$ for all $x,y\in H_1(\SI;\Z_2)$, where $x\cdot y$ denotes the intersection number of $x$ and $y$.
Note that the difference of two quadratic forms is a linear form. Therefore, this set admits a freely transitive action of the abelian group $H^1(\SI;\Z_2)$; in other words, it is an affine
$H^1(\SI;\Z_2)$-space. The {\em Arf invariant\/} of a spin structure is defined as the Arf invariant of the associated quadratic form, that is, the number $\Arf(q)\in\Z_2$ satisfying
\[
(-1)^{\Arf(q)}=\frac{1}{2^g}\sum_{\alpha\in H_1(\SI;\Z_2)}(-1)^{q(\alpha)}.
\] 

Coming back to flat surfaces, let us assume that $G$ is a graph embedded in a flat surface $\SI$ so that $\SI\setminus G$ consists of topological discs, and let $X$ denote the induced
cellular decomposition. We shall now explain how, in such a situation, it is possible to encode spin structures on $\SI$ by some cocycles $\lambda\in Z^1(X;S^1)$.
Let us assume that all cone angles $\vartheta_x$ are positive multiples of $2\pi$, i.e. that $\SI$ has trivial local holonomy. Then, the holonomy defines an element
$\mathrm{Hol}$ of $\mathrm{Hom}(\pi_1(\Sigma),S^1)=H^1(\Sigma;S^1)=H^1(X;S^1)$. We shall call a cocycle $\kappa\in Z^1(X;S^1)$ such that $[\kappa^{-1}]=\mathrm{Hol}$ a
{\em discrete canonical bundle\/} over $\Sigma$. Note that such a cocycle $\kappa$ is very easy to determine. Indeed, it is always possible to represent $\Sigma$ as planar polygons
$P$ with boundary identifications. Furthermore, these polygons can be chosen so that $G$ intersects $\partial P$ transversally, except at possible singularities in $S\cap V(G)$.
Define $\kappa$ by
\[
\kappa(e)=
\begin{cases}
1 & \text{if $e$ is contained in the interior of $P$;} \\
\exp(-i\theta) & \text{if $e$ meets $\partial P$ transversally,}
\end{cases}
\]
where $\theta$ denotes the angle between the sides of $\partial P\subset\C$ met by the edge $e$.  If $S\cap V(G)$ is empty, this
defines completely a natural choice of discrete canonical bundle $\kappa$. Otherwise, the partially defined $\kappa$ above can be extended to a cocycle
yielding a discrete canonical bundle.

Mimicking the continuous case (in the version developed by Atiyah~\cite{Ati}), let us define a {\em discrete spin structure\/} on $\Sigma$ as any cellular 1-cocycle $\lambda\in Z^1(X;S^1)$ such that
$\lambda^2=\kappa$. Two discrete spin structures will be called {\em equivalent\/} if they are cohomologous. The set $\S(X)$ of equivalent classes of discrete spin structures on
$\Sigma$ is then given by
\[
\S(X)=\{[\lambda]\in H^1(X;S^1)\,|\,[\lambda]^2=[\kappa]\}.
\]
Note that if the flat surface $\Sigma$ has trivial holonomy, then $[\kappa]$ is trivial, so the set $\S(X)$ is equal to the $2g$-dimensional vector space $H^1(\Sigma;\Z_2)$.
In general, $G\subset\Sigma$ can be described via planar polygons as explained earlier. In such a case, and assuming that the singular set $S\cap V(G)$ is empty,
a discrete spin structure is given by
\[
\lambda(e)=
\begin{cases}
1 & \text{if $e$ is contained in the interior of $P$;} \\
\exp(-i\theta/2) & \text{if $e$ meets $\partial P$,}
\end{cases}
\]
where $\exp(-i\theta/2)$ denotes one of the square roots of the angle between the sides of $\partial P\subset\C$ met by the edge $e$.

One easily checks that the set $\S(X)$ is an affine $H^1(\Sigma;\Z_2)$-space. Furthermore:

\begin{proposition}
\label{prop:spin}
If all cone angles of $\SI$ are odd multiples of $2\pi$, then there exists a canonical $H^1(\Sigma;\Z_2)$-equivariant bijection $\S(X)\to\S(\Sigma)$.
\end{proposition}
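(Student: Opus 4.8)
The plan is to write down an explicit map $\Psi\colon\S(X)\to\S(\SI)$, to check that it is $H^1(\SI;\Z_2)$-equivariant, and then to invoke the elementary fact that an equivariant map between two non-empty torsors over the same abelian group is automatically a bijection; since $\S(X)$ is non-empty (as observed just before the statement) and both $\S(X)$ and $\S(\SI)$ are affine $H^1(\SI;\Z_2)$-spaces, this will finish the proof, and $\Psi$ will be canonical because its definition involves no auxiliary choice. To build $\Psi$, recall that a point of $\S(\SI)$ is a quadratic form $q\colon H_1(\SI;\Z_2)\to\Z_2$. Given a discrete spin structure $\lambda\in Z^1(X;S^1)$ with $\lambda^2=\kappa$ and a closed curve in $G$ presented as a non-backtracking cyclic sequence of oriented edges $\gamma=(e_1,\dots,e_n)$ with $t(e_j)=o(e_{j+1})$ (indices modulo $n$), I would set
\[
(-1)^{q_\lambda(\gamma)}\;=\;-\prod_{j=1}^{n}\Big(\lambda(e_j)\,i\exp\big(-\tfrac{i}{2}\beta(e_{j+1},\bar e_j)\big)\Big),
\]
the local factors being exactly those of Definition~\ref{def:KW} (equal to $\lambda(e_j)\exp(\tfrac{i}{2}\alpha(e_j,e_{j+1}))$ at non-singular vertices, by Remark~\ref{rem:alpha}). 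This is the device introduced, in the case without cone singularities, in~\cite{Cim1}; carrying it over to the present setting amounts to checking four points: (a) the right-hand side is a real number of modulus one; (b) on \emph{simple} closed curves $q_\lambda(\gamma)$ depends only on $[\gamma]\in H_1(\SI;\Z_2)$; (c) $q_\lambda$ depends only on the class $[\lambda]\in H^1(X;S^1)$; and (d) $q_\lambda(x+y)=q_\lambda(x)+q_\lambda(y)+x\cdot y$. Granting these, one sets $\Psi(\lambda):=q_\lambda$, which is manifestly independent of the polygonal presentation of $\SI$.

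Points (c) and (d) are routine: the former because a coboundary change $\lambda\rightsquigarrow\lambda\,\delta\mu$ alters $\prod_j\lambda(e_j)$ by the telescoping product $\prod_j\mu(t(e_j))/\mu(o(e_j))=1$ along the closed curve $\gamma$; the latter by the standard argument going back to~\cite{Joh}, representing $x,y$ by curves in general position and $x+y$ by the curve obtained by smoothing their crossings, each crossing contributing through the turning-angle bookkeeping exactly its mod-$2$ intersection number. The real work, and the only place where the hypothesis on the cone angles is used, lies in (a) and (b). Rewriting $i\exp(-\tfrac{i}{2}\beta(e_{j+1},\bar e_j))=\exp(\tfrac{i}{2}(\pi-\beta(e_{j+1},\bar e_j)))$, the product of the turning factors is $\exp(\tfrac{i}{2}\Theta(\gamma))$, where $\Theta(\gamma)$ is a total turning of $\gamma$; since $\SI$ has trivial holonomy, the discrete canonical bundle $\kappa$ may be chosen so that $\kappa(\gamma)=\exp(-i\Theta(\gamma))$ for every cycle (a cocycle-level refinement of $[\kappa^{-1}]=\mathrm{Hol}$), and then $\big(\prod_j\lambda(e_j)\big)^2=\kappa(\gamma)=\exp(-i\Theta(\gamma))$ forces the right-hand side into $\{\pm1\}$, proving (a). For (b), when $\gamma=\partial f$ bounds a face a discrete Gauss--Bonnet computation gives $\Theta(\partial f)=\vartheta_{x_f}$, hence $\exp(\tfrac{i}{2}\Theta(\partial f))=\exp(i\vartheta_{x_f}/2)$; combined with the cocycle relation $\lambda(\partial f)=1$ this yields $(-1)^{q_\lambda(\partial f)}=-\exp(i\vartheta_{x_f}/2)$, which equals $+1$ — i.e. $q_\lambda(\partial f)=0$, as it must since $\partial f$ bounds — precisely because $\vartheta_{x_f}$ is an \emph{odd} multiple of $2\pi$. (Were some cone angle an even multiple of $2\pi$ one would obtain $q_\lambda(\partial f)=1$, so no consistent quadratic form could exist; this is why the hypothesis is imposed.) A similar Gauss--Bonnet computation, together with (d), then upgrades this to all separating curves and shows that $q_\lambda$ descends to $H_1(\SI;\Z_2)$.

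Finally, $\Psi$ is equivariant: for $\eps\in H^1(\SI;\Z_2)$, replacing $\lambda$ by $\eps\lambda$ multiplies $(-1)^{q_\lambda(\gamma)}$ by $\prod_j\eps(e_j)=(-1)^{\langle\eps,[\gamma]\rangle}$, hence replaces $q_\lambda$ by $q_\lambda+\eps$, which is exactly the defining affine action of $H^1(\SI;\Z_2)$ on the space of quadratic forms. Being an equivariant map between two non-empty affine $H^1(\SI;\Z_2)$-spaces, $\Psi$ is therefore a bijection, which is the claim. The main obstacle in this program is the cone-point bookkeeping in step~(b): making the discrete Gauss--Bonnet argument precise at vertices of $G$ lying in the singular set (where the angle convention of Definition~\ref{def:KW}, rather than the naive turning $\alpha$, must be used) and verifying that the parity of $\vartheta_x/2\pi$ supplies exactly the correction needed for $q_\lambda$ to be well defined and quadratic.
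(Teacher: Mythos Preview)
Your formula for $q_\lambda$ coincides with the one the paper derives, but the route is different. Instead of positing the formula and checking (a)--(d) by hand, the paper builds from $\lambda$ an explicit vector field $V_\lambda$ on $\SI$: after excising small disks around the singular vertices of $G$, one fixes a base tangent vector and propagates it along each oriented edge $e$ by rotating through $-2\beta_\lambda(e)$ (where $\lambda(e)=\exp(i\beta_\lambda(e))$), then extends over the faces by coning. The cocycle condition $\lambda^2=\kappa$ makes this well defined, and each zero of $V_\lambda$ has even index precisely because the corresponding cone angle is an odd multiple of $2\pi$. Johnson's theorem then delivers the quadratic form $q_\lambda$ in one stroke, and your formula drops out as its winding-number expression; the equivariance check is the same as yours. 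What the paper's approach buys is that your steps (b) and (d) come packaged inside Johnson's theorem, whereas your direct route must essentially reprove that theorem combinatorially. In particular there is a mild circularity in your outline---(d) is stated for homology classes but is invoked in (b) to show that $q_\lambda$ descends to homology in the first place---which can be untangled by establishing a curve-level smoothing relation first, but this is more work than your sketch indicates. Note also that your Gauss--Bonnet identity $\Theta(\partial f)=\vartheta_{x_f}$ holds literally only when the boundary vertices of $f$ are non-singular; in general the oddness hypothesis is needed at the vertices of $G$ on $\partial f$ as well as at the face centre $x_f$ to force $q_\lambda(\partial f)=0$, exactly as you anticipate in your closing remark.
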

\begin{proof}
Let $\kappa\in Z^1(X;S^1)$ be a fixed discrete canonical bundle over $\Sigma$. For each $\lambda\in Z^1(X;S^1)$ such that $\lambda^2=\kappa$, we shall now construct a vector
field $V_\lambda$ on $\SI$ with zeroes of even index. Such a vector field is well-known to define a spin structure, or equivalently -- by Johnson's theorem~\cite{Joh} --
a quadratic form $q_\lambda$ on $H_1(\SI;\Z_2)$. The proof will be completed with the verification that two equivalent $\lambda$'s induce identical quadratic forms, and that
the assignment $[\lambda]\mapsto q_\lambda$ is $H^1(\SI;\Z_2)$-equivariant.

\begin{figure}[h]
\labellist\small\hair 2.5pt
\pinlabel {$X$} at -50 150
\pinlabel {$X'$} at 830 150
\endlabellist
\centerline{\psfig{file=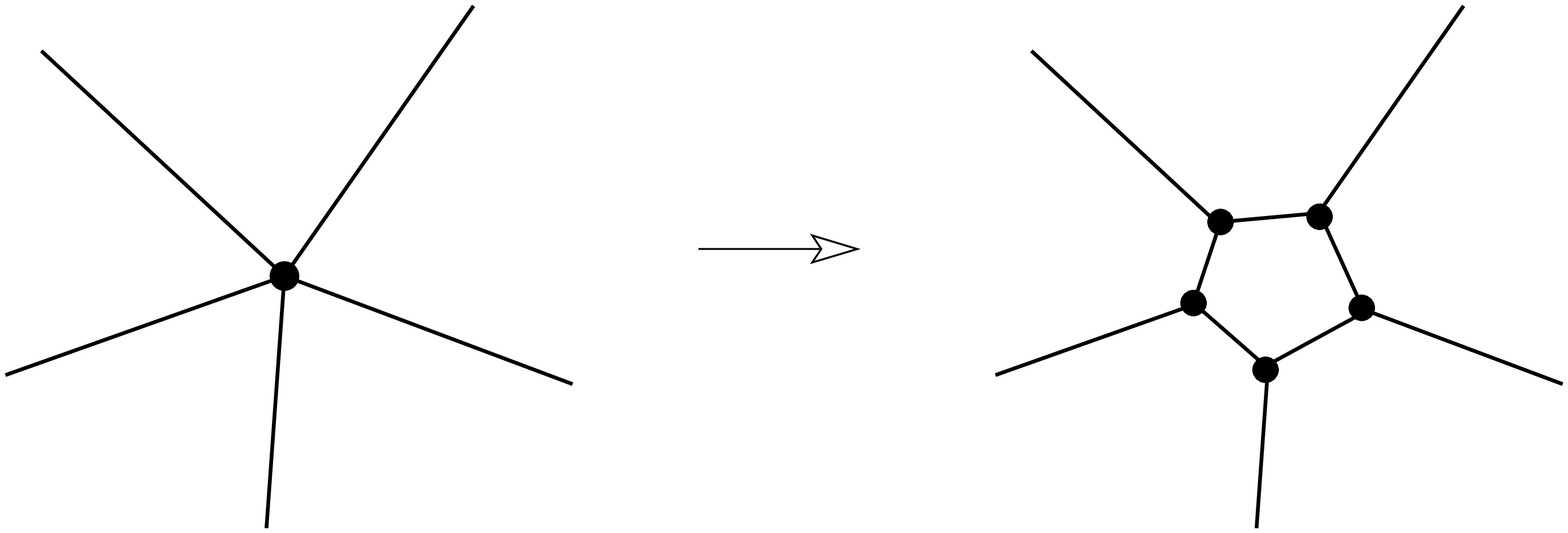,height=1.6cm}}
\end{figure}
First replace the cellular decomposition $X$ of $\SI$ by $X'$, where each singularity $v\in V(G)\cap S$ is
removed as illustrated above. Obviously, $\lambda$ induces $\lambda'\in Z^1(X';S^1)$ by setting $\lambda'(e)=1$ for each newly created edge $e$.
Fix an arbitrary orientation $\omega$ of the edges of $X'$. This allows to represent $\lambda'\in Z^1(X';S^1)$ as follows: write $\lambda'(e)=\exp(i\beta_{\lambda'}(e))$ with
$0\le\beta_{\lambda'}(e)<2\pi$ if $e$ is an edge oriented by $\omega$, and set $\beta_{\lambda'}(\bar{e})=-\beta_{\lambda'}(e)$ for the reverse edge.
Fix an arbitrary tangent vector $V_{\lambda}(v)$ at some arbitrary vertex $v$ of $X'$, and extend it to the 1-skeleton $G'$ of $X'$ as follows: running along an edge $e$,
rotate the tangent vector by an angle of $2\beta_{\lambda'}(e)$ in the negative direction. Since $\lambda$ is a cocycle such that $\lambda^2=\kappa$, and since each cone angle is a multiple of
$2\pi$, this gives a well-defined vector field along $G'$. Extend it to the whole surface $\SI$ by the cone construction, creating one zero in each face of $X$ and at each element of
$V(G)\cap S$. Obviously, the resulting vector field $V_\lambda$ depends on the choice of $\omega$, but not in a crucial way. Indeed, reversing the orientation $\omega$ on a given edge $e$ either does
nothing (if $\lambda'(e)=1$), or corresponds to adding two full twists to the vector field along $e$. Therefore, the parity of winding numbers with respect $V_\lambda$ are independent of $\omega$.
In particular, one easily checks that a zero of $V_\lambda$ is of even index if and only if the corresponding cone angle is an odd
multiple of $2\pi$, which we assumed.

As explained in~\cite{Joh}, the quadratic form $q_\lambda\colon H_1(\SI;\Z_2)\to\Z_2$ corresponding to $V_\lambda$ is determined as follows: for any regular oriented
simple closed curve $C\subset\Sigma\setminus S$, the number $q_\lambda([C])+1$ is equal to the mod 2 winding number of the tangential vector field along $C$ with
respect to the vector field $V_\lambda$. For an oriented simple closed curve $C\subset G$, we obtain the following equality modulo 2:
\[
q_\lambda([C])=1+\frac{1}{\pi}\Big(\sum_{e\subset C} \beta_\lambda(e)+\frac{1}{2}\sum_{v\in C}\alpha_v(C)\Big),
\]
where the first sum is over all oriented edges in the oriented curve $C$, and $\alpha_v(C)$ is the angle illustrated below. (This angle should be interpreted as explained in Remark~\ref{rem:alpha}.)
\begin{figure}[h]
\labellist\small\hair 2.5pt
\pinlabel {$v$} at 168 48
\pinlabel {$\alpha_v(C)$} at 300 100
\pinlabel {$C$} at 420 190
\endlabellist
\centerline{\psfig{file=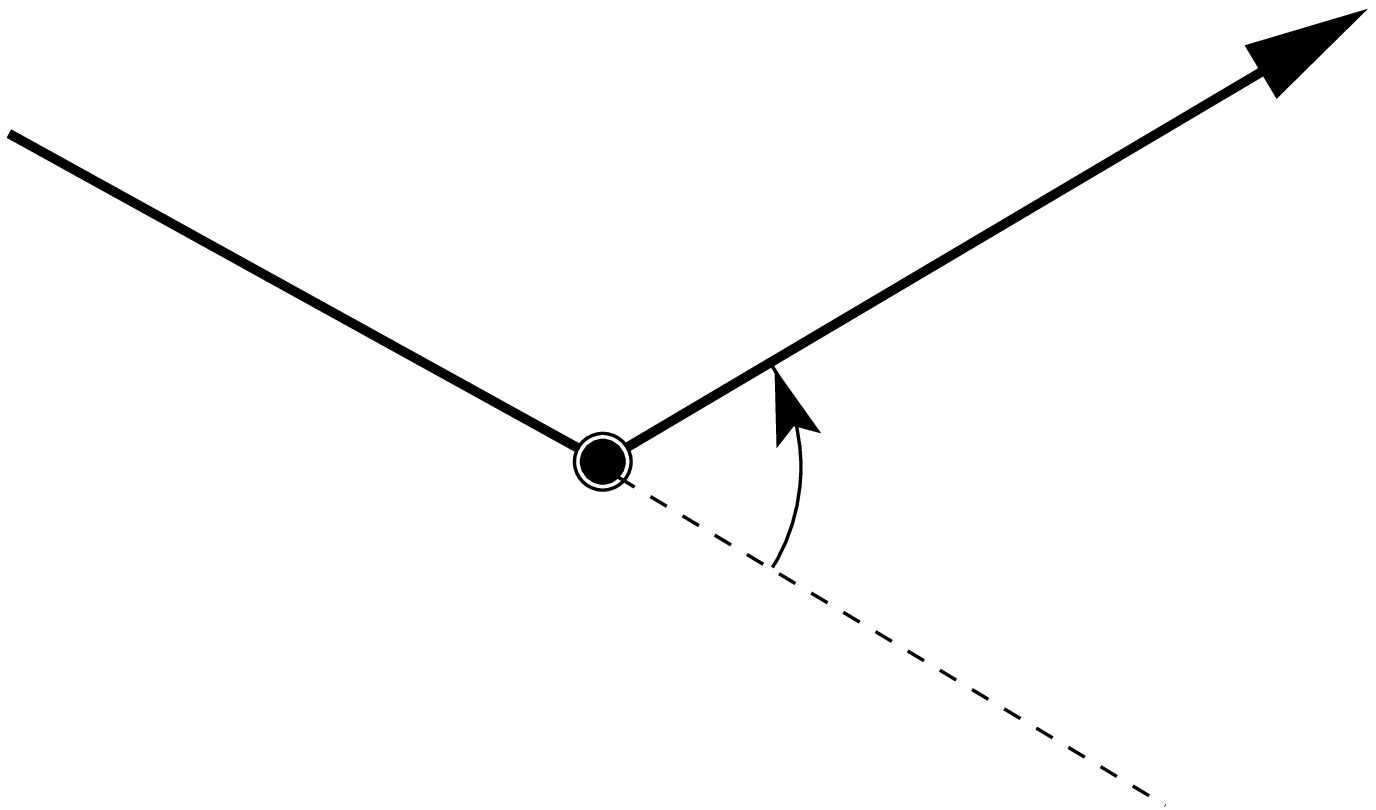,height=2cm}}
\end{figure}
Obviously, equivalent $\lambda$'s induce the same quadratic form $q_\lambda$. Finally, given two discrete spin structures $\lambda_1,\lambda_2$, the cohomology class of the 1-cocycle
$\lambda_1/\lambda_2\in Z^1(X;\{\pm 1\})$ is determined by its value on oriented simple closed curves in $G$. For such a curve $C$, we have
\[
(\lambda_1/\lambda_2)(C)=\exp\Big(i\sum_{e\subset C}(\beta_{\lambda_1}(e)-\beta_{\lambda_2}(e))\Big)=\exp\big(i\pi(q_{\lambda_1}-q_{\lambda_2})([C])\big).
\]
Therefore, the assignment $[\lambda]\mapsto q_\lambda$ is $H^1(\SI;\Z_2)$-equivariant, which concludes the proof.
\end{proof}

\subsection{The Kac-Ward formula for flat surfaces}

We are finally ready to prove the main result of this section, motivating the introduction of twisted Kac-Ward matrices and discrete spin structures.
As before, let $(G,x)$ be a weighted graph embedded in a closed orientable flat surface $\Sigma$ so that each edge of $G$ is a straight line, $\Sigma\setminus G$ consists of topological discs,
and the set $S$ of cone-type singularities is contained in $V(G)\cup V(G^*)$.

\begin{theorem}
\label{thm:Arf}
If all cone angles are odd multiple of $2\pi$, then the Kac-Ward determinant $\tau^\varphi(G,x)$ is the square of a polynomial in the $x_e$'s whenever $\varphi$ is a
discrete spin structure on $G\subset\SI$. Furthermore, if $\tau^\varphi(G,x)^{1/2}$ denotes the square root with constant coefficient equal to $+1$, then the Ising partition
function on $G$ is given by
\[
Z(G,x)=\frac{1}{2^g}\sum_{\lambda\in\S(X)}(-1)^{\Arf(\lambda)}\tau^\lambda(G,x)^{1/2},
\]
where $g$ is the genus of $\SI$ and $\Arf(\lambda)\in\Z_2$ the Arf invariant of the spin structure corresponding to $\lambda$.
\end{theorem}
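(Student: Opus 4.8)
The plan is to deduce the statement from the generalized Kac--Ward formula of~\cite{Cim2}. Recall that~\cite{Cim2} associates to a finite weighted graph $(G,x)$, together with a fixed cellular embedding in a closed oriented surface $\SI$ of genus $g$ and a discrete spin structure $\lambda$ on it, a ``generalized Kac--Ward matrix'' whose entries involve a combinatorial choice of turning angles at the corners of the embedding, and proves there that its determinant is the square of a polynomial in the $x_e$ and that $Z(G,x)=2^{-g}\sum_\lambda(-1)^{\Arf(\lambda)}(\det)^{1/2}$. From this point of view the content of Theorem~\ref{thm:Arf} is that, in the flat-surface situation, these matrices take the explicit and simple form of Definition~\ref{def:KW}; once that is established, both the perfect-square assertion and the Arf-weighted formula for $Z(G,x)$ are literally what~\cite{Cim2} proves.

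So the first, and main, step is a local verification: the flat metric on $\SI$ supplies a canonical system of corner angles, and one must check, corner by corner, that feeding these geometric angles into the recipe of~\cite{Cim2} reproduces exactly the entry $\varphi(e)\,i\exp(-\tfrac{i}{2}\beta(e',\bar e))\,x_e$. At a non-singular vertex this is immediate from Remark~\ref{rem:alpha}, since there $i\exp(-\tfrac i2\beta(e',\bar e))=\exp(\tfrac i2\alpha(e,e'))$ with $\alpha(e,e')$ the geometric turning angle from $e$ to $e'$. At a cone point $v$ of angle $\vartheta_v$ the turning angle is not literally well-defined, and here one uses the hypothesis that $\vartheta_v$ is an \emph{odd} multiple of $2\pi$ to see that the resulting ambiguity changes the entry only by a sign that is reabsorbed by the choice of discrete spin structure --- this is the parity bookkeeping that already appears in the proof of Proposition~\ref{prop:spin}. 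Simultaneously one checks that the notion of discrete spin structure used in~\cite{Cim2} agrees with the one recalled here: a cocycle $\lambda$ with $\lambda^2=\kappa$ defines a vector field on $\SI$ with even-index zeros, hence a quadratic form $q_\lambda$ on $H_1(\SI;\Z_2)$, and Proposition~\ref{prop:spin} furnishes the identification $\S(X)\cong\S(\SI)$ together with the compatibility of the Arf invariants.

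For the reader's convenience I would also recall the mechanism behind the final identity. Expanding $\det(I-T^\lambda)$ over permutations of $\EE$, the surviving terms are indexed by families of non-backtracking loops in $G$ using each oriented edge at most once; the product of the factors $i\exp(-\tfrac i2\beta)$ around such a loop $p$ telescopes, by Remark~\ref{rem:alpha} and the cone-angle hypothesis, to $(-1)^{w(p)}$, where $w(p)$ is the turning number of $p$ in the flat structure, so that a loop family covering an even subgraph $\gamma$ contributes $\prod_{e\in\gamma}x_e$ times a sign. The delicate point --- and the place where~\cite{Cim2} does the real work, in the spirit of Sherman's lemma and its higher-genus refinements --- is that, after summing over all loop families covering a given $\gamma$, this sign equals $(-1)^{q_\lambda([\gamma])}$: passing from a product of local turning-number signs to the value of the quadratic form on $[\gamma]\in H_1(\SI;\Z_2)$ genuinely uses the relation $q_\lambda(a+b)=q_\lambda(a)+q_\lambda(b)+a\cdot b$ together with the self-intersections of the loops. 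Granting this, $\det(I-T^\lambda)=\big(\sum_{\gamma\in\mathcal{E}(G)}(-1)^{q_\lambda([\gamma])}\prod_{e\in\gamma}x_e\big)^2$, and since the bracketed polynomial has constant term $(-1)^{q_\lambda(0)}=1$ it is precisely $\tau^\lambda(G,x)^{1/2}$.

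Finally one sums over spin structures. Using the above and the identification of $\S(X)$ with the set of all $2^{2g}$ quadratic forms on $H_1(\SI;\Z_2)$,
\[
\frac{1}{2^g}\sum_{\lambda\in\S(X)}(-1)^{\Arf(\lambda)}\tau^\lambda(G,x)^{1/2}=\sum_{\gamma\in\mathcal{E}(G)}\Big(\prod_{e\in\gamma}x_e\Big)\,\frac{1}{2^g}\sum_{\lambda}(-1)^{\Arf(q_\lambda)+q_\lambda([\gamma])},
\]
and a short Gauss-sum computation --- substituting $(-1)^{\Arf(q_\lambda)}=2^{-g}\sum_\alpha(-1)^{q_\lambda(\alpha)}$, using $q_\lambda(\alpha)+q_\lambda(\xi)=q_\lambda(\alpha+\xi)+\alpha\cdot\xi$, summing over all quadratic forms, and invoking the vanishing of self-intersections mod $2$ --- shows that the inner sum equals $1$ for every class $[\gamma]$. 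Hence the right-hand side is $\sum_{\gamma\in\mathcal{E}(G)}\prod_{e\in\gamma}x_e=Z(G,x)$. To summarize, the only genuine obstacle is the first step, namely establishing that Definition~\ref{def:KW} is the specialization of the considerably more intricate Definition~1 of~\cite{Cim2} to the flat geometry; everything afterward is either quoted from~\cite{Cim2} or a formal manipulation with quadratic forms.
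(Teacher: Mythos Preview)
Your overall strategy---reduce to the main theorem of~\cite{Cim2} after identifying the flat-surface Kac--Ward matrix with the general one---matches the paper's, but the technical route differs. You propose an entry-by-entry matching of matrices followed by a permutation-expansion sketch; the paper instead invokes Bass' theorem to write $\tau^\varphi(G,x)=\prod_{\gamma\in\mathcal{P}(G)}\bigl(1-\varphi(\gamma)\exp(\tfrac{i}{2}\alpha(\gamma))x(\gamma)\bigr)$ as an infinite product over prime reduced closed paths. This has two advantages: the square structure is then immediate (pair $\gamma$ with $-\gamma$, using that $\varphi(\gamma)\exp(\tfrac{i}{2}\alpha(\gamma))\in\{\pm 1\}$ for a discrete spin structure), and the comparison with~\cite{Cim2} reduces to matching the closed-path sign $(-1)^{w_\lambda(\gamma)}$ with the mod~$2$ winding number of $\gamma$ relative to the vector field $V_\lambda$ built in Proposition~\ref{prop:spin}---no entry-level bookkeeping of sign ambiguities is needed, since only products around closed loops enter. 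Your ``local verification'' is not wrong in spirit, but it is vaguer: the matrices of~\cite{Cim2} depend on non-canonical choices, so a literal corner-by-corner match requires first fixing those choices compatibly with the flat geometry, which is exactly what the Bass-theorem route bypasses. On the plus side, your final Gauss-sum computation (showing $2^{-g}\sum_\lambda(-1)^{\Arf(q_\lambda)+q_\lambda(\xi)}=1$ for every $\xi$) is a clean self-contained derivation of the last step that the paper simply outsources to~\cite[Corollary~2.2]{Cim2}.
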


The demonstration given below is by no means self-contained: it only consists in recasting the flat surface case in the more general (and more complicated) topological setting
discussed by the author in~\cite{Cim2}. We refer to this article for further details.

\begin{proof}
By Bass' Theorem \cite{Bas}, $\tau^\varphi(G,x)$ is given by
\[
\tau^\varphi(G,x)=\det(I-T^\varphi)=\prod_{\gamma\in\mathcal{P}(G)}\Big(1-\prod_{(e,e')\in\gamma}T^\varphi_{e,e'}\Big),
\]
where $\mathcal{P}(G)$ is the (infinite) set of prime reduced oriented closed paths in $G$, and the second product is over all pairs of consecutive oriented edges in the oriented path $\gamma$.
By definition of $T^\varphi$,
\[
\prod_{(e,e')\in\gamma}T^\varphi_{e,e'}=\varphi(\gamma)\,\exp\left(\textstyle{\frac{i}{2}}\alpha(\gamma)\right)\,x(\gamma),
\]
where $x(\gamma)=\prod_{e\in\gamma}x_e$ and $\alpha(\gamma)$ is the sum of the angles $\alpha(e,e')$ along $\gamma$ (interpreted as in Remark~\ref{rem:alpha}).
This equality already shows that $\tau^\varphi(G,x)$ does
not depend on the choice of the 1-cocycle representing the cohomology class $\varphi$. Furthermore, if $\varphi\in Z^1(X;S^1)$ is a discrete spin structure, then
\[
\left(\varphi(\gamma)\,\exp\left(\textstyle{\frac{i}{2}}\alpha(\gamma)\right)\right)^2=\varphi(\gamma)^2\,\exp(i\alpha(\gamma))=\varphi(\gamma)^2\,\kappa^{-1}(\gamma)=1
\]
for all oriented closed path $\gamma$ in $G$. This implies that $\varphi(\gamma)\,\exp\left(\frac{i}{2}\alpha(\gamma)\right)$ always belongs to
$\{\pm 1\}$. Therefore, for such a $\varphi=\lambda\in\S(X)$,
\[
\tau^\lambda(G,x)=\prod_{[\gamma]\in\mathcal{P}(G)/\sim}\left(1-(-1)^{w_\lambda({\gamma})}x(\gamma)\right)^2,
\]
for some $w_\lambda(\gamma)\in\Z_2$, where the equivalence relation on $\mathcal{P}(G)$ is given by $\gamma\sim-\gamma$. Using the notations of the proof of Proposition~\ref{prop:spin},
the element $w_\lambda(\gamma)$ satisfies
\[
(-1)^{w_\lambda(\gamma)}=\lambda(\gamma)\,\exp\left(\textstyle{\frac{i}{2}}\alpha(\gamma)\right)=
\exp\Big(i\sum_{e\subset\gamma}\beta_\lambda(e)+{\frac{i}{2}}\sum_{v\in\gamma}\alpha_v(\gamma)\Big).
\]
Therefore, $w_\lambda(\gamma)$ is nothing but the mod 2 winding number of the tangent vector field along $\gamma$ with respect to the vector field
$V_\lambda$ associated to the discrete spin structure $\lambda$. The formula now follows from Proposition~\ref{prop:spin} and \cite[Corollary~2.2]{Cim2}.
\end{proof}

\begin{remark}
The whole setting can be extended to encompass graphs embedded in flat surfaces with boundary. If there is exactly one boundary component, then Theorem~\ref{thm:Arf} extends verbatim.
(In particular, it applies to domains in the plane, where this theorem is exactly the original Kac-Ward formula~\cite{K-W}.)
If the flat surface has several boundary components, then the formula is slightly more complicated.
\end{remark}

%%%%%%%%%%%%%%%%%%%%%%%%%%%%%%%%%%%

\section{The Kac-Ward matrices with critical weights}
\label{sec:crit}

As proved in the previous section, the $\varphi$-twisted Kac-Ward matrices can be used to compute the Ising partition function for any weighted graph embedded in a flat surface.
We shall now assume the graph to be isoradial and the weights to be critical (recall Definitions~\ref{def:crit} and \ref{def:nu}). We will start with a combinatorial interpretation for
$\tau^\varphi(G,\nu)$ (Proposition~\ref{prop:tech}), that we then use for two of our main results. First, we prove a duality theorem relating $\tau^\varphi(G,\nu)$ and $\tau^\varphi(G^*,\nu^*)$
(Theorem~\ref{thm:duality}). Then, we show that $\tau^\varphi(G,\nu)$ coincides up to a multiplicative constant with the determinant of the critical discrete Laplacian
on $G$ if and only if the genus of $\Sigma$ is zero or one (Theorem~\ref{thm:Delta}). In a last subsection, we explain how $\tau^\varphi(G,\nu)$ can be understood as a discrete version of the
$\overline\partial$-torsion of the underlying Riemann surface.

\subsection{A combinatorial interpretation for $\tau^\varphi(G,\nu)$}
\label{sub:comb}

Let us start with some notations.

Given a graph $G$, let $\F(G)$ denote the set of subgraphs $F\subset G$ such that $F$ spans all vertices of $G$, and no connected component of $F$ is a tree.
Also, for a graph $F$ embedded in an oriented surface $\SI$, we shall denote by $N(F)$ a small tubular neighborhood of $F$ in $\SI$. (This is simply $F$ ``thickened", as illustrated in the middle
of Figure~\ref{fig:N}.) Since $\SI$ is oriented, so is $N(F)$, and this induces an orientation on the boundary $\partial N(F)$ of $N(F)$. Therefore, $\partial N(F)$ consists of a disjoint union of
oriented simple closed curves $\gamma$ on $\SI$. This is illustrated in Figure~\ref{fig:N}.

\begin{proposition}
\label{prop:tech}
Let $G$ be a graph isoradially embedded in a flat surface $\SI$, and let us assume that all cone angles $\vartheta_v$
of singularities $v\in V(G)$ are odd multiple of $2\pi$. Let $\nu_e=\tan(\theta_e/2)$ denote the critical weight system on $G\subset\Sigma$, and set $\mu_e=i\tan(\theta_e)$.
Then for any 1-cocycle $\varphi$,
\[
\tau^\varphi(G,\nu)=C\,\sum_{F\in\F(G)}\prod_{\gamma\subset\partial N(F)}(1-\varphi(\gamma))\,\mu(F),
\]
where the product is over all connected components $\gamma$ of the (clockwise oriented) boundary of a tubular neighborhood $N(F)$ of $F$ in $\Sigma$, $\mu(F)=\prod_{e\in F}\mu_e$,
and the constant $C$ is equal to
\[
C=(-1)^{|V(G)|}\,2^{-\chi(G)}\prod_{v\in V(G)}\exp(i\vartheta_v/4)\prod_{e\in E(G)}\frac{\cos(\theta_e)}{1+\cos(\theta_e)},
\]
with $\chi(G)=|V(G)|-|E(G)|$ the Euler characteristic of $G$.
\end{proposition}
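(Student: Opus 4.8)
The plan is to start from Bass's theorem, exactly as in the proof of Theorem~\ref{thm:Arf}, which expresses $\tau^\varphi(G,\nu)=\det(I-T^\varphi)$ as an infinite product over prime reduced closed paths. However, rather than unfolding that product, I would use the equivalent expansion of $\det(I-T^\varphi)$ as a sum over finite families of disjoint (unoriented) closed paths, i.e. over ``loop configurations'': $\det(I-T^\varphi)=\sum (-1)^{\#\text{loops}}\prod (\text{contribution of each loop})$. A loop configuration in which every vertex has degree $0$ or $2$ in the usual sense corresponds, after forgetting orientations and multiplicities, to a subgraph $F\subset G$ in which no component is a tree and which need not span all vertices; allowing the isolated vertices to be absorbed gives precisely $\F(G)$ once one accounts for the $2^{|V(G)|}$-type bookkeeping in the constant $C$. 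So the first step is to set up this ``non-backtracking loop'' expansion of the determinant and organize it by the underlying subgraph $F$.

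The second step, and the technical heart, is to evaluate, for a fixed $F\in\F(G)$, the sum over all ways of decomposing $F$ into disjoint non-backtracking oriented closed paths, together with the accumulated weights. The weight of an oriented path $\gamma$ is $\varphi(\gamma)\exp(\tfrac{i}{2}\alpha(\gamma))\,\nu(\gamma)$, where $\alpha(\gamma)$ is the total turning angle (interpreted as in Remark~\ref{rem:alpha}). Here the isoradial/critical geometry enters decisively: at each vertex of $F$ of degree $2k$, the contributions over the $(2k-1)!!$-type pairings of incoming/outgoing half-edges, weighted by $\exp(\tfrac{i}{2}\alpha)$ and the factors $\nu_e=\tan(\theta_e/2)$, must collapse. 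The key identity to establish is a local, vertex-by-vertex computation showing that, after summing over local pairings and over the two orientations of each loop, the product $\prod_{e\in F}\nu_e$ times the phase factors reassembles into $\prod_{e\in F}\mu_e=\prod_{e\in F} i\tan(\theta_e)$ times a product of $(1-\varphi(\gamma))$ over the boundary curves $\gamma$ of the ribbon neighborhood $N(F)$, times a universal constant depending only on the cone angles $\vartheta_v$ and the $\theta_e$. This is where the relation $\mu_e=i\tan\theta_e=2\tan(\theta_e/2)/(1-\tan^2(\theta_e/2))$ — i.e. the double-angle formula — converts the ``side-length'' weight $\nu_e$ into the ``diagonal'' weight $\mu_e$, and where the identity $i\exp(-\tfrac{i}{2}\beta(e',\bar e))$ at a genuine singularity produces the factors $\exp(i\vartheta_v/4)$ and $\cos\theta_e/(1+\cos\theta_e)$ appearing in $C$. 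The fact that the boundary $\partial N(F)$ is clockwise oriented, and that each loop-orientation choice contributes $\varphi(\gamma)$ for one boundary component and $\varphi(\bar\gamma)=\overline{\varphi(\gamma)}$ (so that pairing a loop with its reverse yields $1-\varphi(\gamma)-\varphi(\bar\gamma)+1$, which must be reconciled with $\prod_\gamma(1-\varphi(\gamma))$ using $\varphi(\partial N(F))=1$ for contractible pieces) is exactly the combinatorial bridge between oriented-loop data and the boundary of the ribbon surface.

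The third step is purely bookkeeping: collect the vertex-local constants into the stated global constant $C$, using $\sum_v 1 = |V(G)|$, $\sum_v (\text{degree}) = 2|E(G)|$ and $\chi(G)=|V(G)|-|E(G)|$ to get the exponent $2^{-\chi(G)}$, the sign $(-1)^{|V(G)|}$ from the $(-1)^{\#\text{loops}}$ and Euler-characteristic reasoning on each ribbon, and the products over vertices and edges as written. One must check that vertices \emph{not} lying on $F$ contribute trivially (their local factor being $1$), that vertices on $F$ contribute the same local constant regardless of the degree of $F$ at that vertex — this is the real content of the local identity — and that the cone-angle hypothesis (all $\vartheta_v$ odd multiples of $2\pi$) is what makes the phase $\exp(i\vartheta_v/4)$ a well-defined fourth root of unity and makes the whole expression a polynomial in the $\nu_e$ with the right leading behavior.

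The main obstacle I expect is the local vertex identity in step two: proving that the sum over all local non-backtracking pairings at a vertex $v$, weighted by the half-turning-angle phases and the critical weights on the incident edges, telescopes into a single monomial times $\exp(i\vartheta_v/4)$-type constants, uniformly over the degree of $F$ at $v$. For a degree-$2$ vertex this is a two-term trigonometric identity equivalent to the double-angle formula, but for higher even degrees it becomes a nontrivial Pfaffian-like cancellation among $\exp(\tfrac{i}{2}\alpha)$ phases around the vertex; I would try to prove it by induction on the local degree, or by recognizing the local sum as a determinant/Pfaffian of a small $\binom{2k}{2}$ matrix of phases whose entries are forced by the flat geometry, and then evaluating that Pfaffian explicitly. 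A secondary subtlety is handling loops and multiple edges in $G$ cleanly, and making sure the orientation convention on $\partial N(F)$ is consistently clockwise so that the signs match; I would dispose of this by first proving the identity on simple graphs with $F$ a disjoint union of cycles and then arguing that both sides are unchanged under the relevant elementary moves.
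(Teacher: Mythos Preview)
Your plan diverges substantially from the paper's proof, and the main obstacle you identify is real---but more serious than you suggest. The ``local vertex identity'' you hope for does not exist in the form you describe. At a vertex of $F$ of degree $2k$, the sum over local non-backtracking pairings weighted by $\exp(\tfrac{i}{2}\alpha)\,\nu$ does \emph{not} collapse to a single monomial times a universal vertex constant: the factors $\prod_\gamma(1-\varphi(\gamma))$ record how the edges of $F$ are globally connected into ribbon boundary components, and no purely local reduction can produce them. Relatedly, your step one is confused about how arbitrary $F\in\F(G)$ (with vertices of degree $>2$) arise: forgetting orientations in a degree-$2$ loop configuration only ever yields a disjoint union of cycles, not a general spanning subgraph.

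The paper avoids these difficulties by performing two matrix manipulations \emph{before} any combinatorial expansion. First (a Foata--Zeilberger trick), it multiplies the transposed Kac-Ward matrix by $I+iJ\varphi\nu$ to produce a matrix $A=I-\mathit{Succ}+\mathit{Com}$ in which the problematic full-star sums $\sum_{e'}\omega(e',\bar e)e'$ persist. Second---and this is the step your plan is missing---it multiplies by $P=I-RQ$, where $Q(e)=\exp(\tfrac{i}{2}\beta(e,R(e)))e$. The point is that $P$ telescopes the $\omega$-sums: $P\,\mathit{Succ}=-2iJ\varphi\nu$ and $P\,\mathit{Com}=-\nu^2(I+RQ)$, and $\det P=\prod_v(1-\exp(\tfrac{i}{2}\vartheta_v))=2^{|V|}$ precisely because each cone angle is an odd multiple of $2\pi$. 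Only then do the double-angle identities convert $\nu$ to $\mu$, yielding the simple matrix $M=I+J\varphi\mu-R(1+\mu)$. The combinatorial expansion is done on $\det M$, not on $\det(I-T^\varphi)$: cycles of the permutation now alternate between $J$-steps and $R$-steps, the $(1+\mu_e)$ factors are expanded and regrouped by support $F$, and the resulting coefficient $c_F^\varphi$ is computed as $\det(R-J\varphi)$ on $\EE(F)$, whose orbits under $R^{-1}J$ are exactly the components of $\partial N(F)$. The constant $C$ then drops out of the accumulated determinants $\det(I+iJ\varphi\nu)$, $\det P$, and the $(1-\nu_e^2)$ and $(1+\mu_e)$ factors. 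If you want to salvage your direct approach, the missing idea is this $P$-telescoping, which is a matrix identity rather than a vertex-by-vertex combinatorial one.
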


\begin{figure}[Htb]
\labellist\small\hair 2.5pt
\pinlabel {$F$} at 110 330
\pinlabel {$N(F)$} at 690 345
\pinlabel {$\partial N(F)$} at 1300 360
\endlabellist
\centerline{\psfig{file=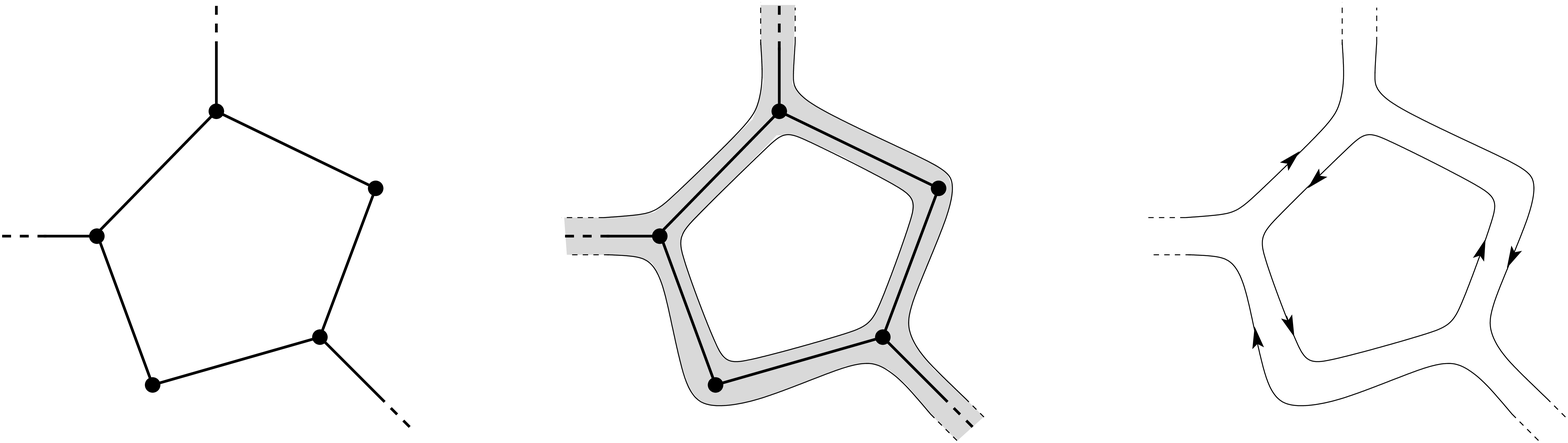,height=3.5cm}}
\caption{Part of a graph $F$ embedded in a surface $\SI$, the corresponding tubular neighborhood $N(F)$, and the oriented boundary $\partial N(F)$.}
\label{fig:N}
\end{figure}

The proof being quite substantial, we will split it into two lemmas. Let us begin with some notation. We shall write $V=V(G)$ and $E=E(G)$ for the sets of vertices and edges of the graph $G$,
${\EE}={\EE}(G)$ for the set of oriented edges of $G$ and $\L({\EE})$ for the complex vector space spanned by ${\EE}$. Obviously, ${\EE}$ can be partitioned into
${\EE}=\bigsqcup_{v\in V}E_v$, where $E_v$ contains all oriented edges $e$ with origin $o(e)=v$. Now, let us cyclically order the elements of $E_v$ by turning counterclockwise
around $v$. (As $\SI$ is orientable, this can be done in a consistent way.) Given $e\in E_v$, let $R(e)$ denote the next edge with respect to this cyclic order, as illustrated in Figure~\ref{fig:R}.
This induces an endomorphism $R$ of $\L({\EE})$. Also, let $J$ denote the endomorphism of $\L({\EE})$ given by $J(e)=\bar{e}$. Finally, we shall write $\mu$ for the endomorphism of $\L(\EE)$ given by
$\mu(e)=\mu_e\,e$, and similarly for any weight system and for $\varphi$. 

The aim of the first lemma is to relate $\tau^\varphi(G,\nu)$ to the determinant of a more tractable matrix.

\begin{lemma}
\label{lemma:KW}
For $G\subset\SI$ as in Proposition~\ref{prop:tech}, and for any 1-cocycle $\varphi$,
\[
\tau^\varphi(G,\nu)=2^{-\chi(G)}\prod_{e\in E}\frac{\cos^2(\theta_e)}{1+\cos(\theta_e)}\det M,
\]
with $M=I+J\varphi\mu-R(\mu+1)\in\mathit{End}(\L({\EE}))$ and $\chi(G)=|V(G)|-|E(G)|$.
\end{lemma}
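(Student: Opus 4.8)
The goal is to rewrite the Kac-Ward determinant $\tau^\varphi(G,\nu)=\det(I-T^\varphi)$ on $\L(\EE)$ as a scalar multiple of $\det M$ with $M=I+J\varphi\mu-R(\mu+1)$. The first step is to understand the entry $T^\varphi_{e,e'}$ in the critical case. Using Remark~\ref{rem:alpha}, at a non-singular vertex $v$ the factor $i\exp(-\frac{i}{2}\beta(e',\bar e))$ equals $\exp(\frac{i}{2}\alpha(e,e'))$; at a singular vertex one must keep the less intuitive form, but because all cone angles $\vartheta_v$ are odd multiples of $2\pi$ the proof of Proposition~\ref{prop:spin} shows the phases can still be tracked consistently. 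I would then substitute the critical weight $\nu_e=\tan(\theta_e/2)$ and use the half-angle trigonometry: the identity $\tan(\theta_e/2)=\frac{1-\cos\theta_e}{\sin\theta_e}=\frac{\sin\theta_e}{1+\cos\theta_e}$ lets one factor $\nu_e$ into a product of a contribution attached to the origin of $e$ and one attached to the terminus of $e$. This is the standard trick that converts the ``$|\EE|\times|\EE|$'' Kac-Ward matrix into something built from the local combinatorial operators $R$ and $J$.

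The second step is the algebraic manipulation of determinants. I would introduce the diagonal rescaling operator $D\in\mathit{End}(\L(\EE))$ whose entries absorb the $\sqrt{\ }$-type prefactors $\cos\theta_e$, $1+\cos\theta_e$, and the vertex phases $\exp(i\vartheta_v/4)$ (these are exactly the factors appearing in the constant $C$ of Proposition~\ref{prop:tech}, and—crucially—in the prefactor $2^{-\chi(G)}\prod_e\frac{\cos^2\theta_e}{1+\cos\theta_e}$ of the present lemma). Conjugating, $\det(I-T^\varphi)=\det\big(D^{-1}(I-T^\varphi)D\big)$, and the point is that after this conjugation and after pulling out the scalar $\prod_e$-factor, the matrix $D^{-1}(I-T^\varphi)D$ becomes (a scalar multiple of) $M=I+J\varphi\mu-R(\mu+1)$, where $\mu_e=i\tan\theta_e$ arises naturally because $\frac{\sin\theta_e}{\cos\theta_e}\cdot i = i\tan\theta_e$ once the phase $i$ from the definition of $T^\varphi$ is combined with the trigonometric ratio. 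The three terms of $M$ should correspond to: the identity $I$; the ``backtracking-like'' term $J\varphi\mu$ coming from the pair $(e,\bar e')$ contributions together with the cocycle twist $\varphi$; and the ``turning'' term $R(\mu+1)$ coming from the genuine Kac-Ward transitions $t(e)=o(e')$, $e'\neq\bar e$, where the $\mu+1$ combines the two ways of writing $\nu_e$. Getting the bookkeeping of which $\cos\theta_e$ and $1+\cos\theta_e$ goes with $o(e)$ versus $t(e)$ exactly right, so that the leftover scalar is precisely $2^{-\chi(G)}\prod_e\frac{\cos^2\theta_e}{1+\cos\theta_e}$ and not something off by a $\prod_e\cos\theta_e$ factor, is where the $2^{-\chi(G)}$ ultimately comes from (each vertex contributes a factor $2$ from summing the half-rhombus contributions of its incident edges, each edge is shared by two vertices, and $\chi(G)=|V|-|E|$ records the discrepancy).

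**Main obstacle.** The hard part will be the treatment of the singular vertices and the precise normalization constant. At a singularity $v$ with $\vartheta_v$ an odd multiple of $2\pi$, the angle $\beta(e',\bar e)$ lives in $(0,\vartheta_v)$ rather than in $(0,2\pi)$, so the naive identification with $\exp(\frac i2\alpha(e,e'))$ fails; one has to verify that the products of phases around closed paths still behave as in the non-singular case, which is exactly the content extracted from the vector-field construction in the proof of Proposition~\ref{prop:spin}, and this is where the vertex factor $\exp(i\vartheta_v/4)$ and the count $N$ (number of $v$ with $\vartheta_v/2\pi\equiv 3\ (\mathrm{mod}\ 4)$) will eventually enter—though in this lemma they are hidden inside $\det M$ and only surface in Proposition~\ref{prop:tech} and Theorem~\ref{thm:Delta}. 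I would isolate this in a short sublemma: for each oriented edge $e$ with $o(e)=v$, express the relevant phase in $T^\varphi_{e,e'}$ as $\exp(i\vartheta_v/4)$ times a ``standard'' phase depending only on the rhombus angles, uniformly in whether $v$ is singular, and then proceed with the conjugation argument exactly as in the planar case treated in~\cite{Cim2}. The rest is routine linear algebra and trigonometric simplification, which I would not spell out in full.
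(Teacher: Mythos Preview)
Your proposal has a genuine gap: a diagonal conjugation cannot do what you ask of it. The matrix $I-T^\varphi$ has, for each row indexed by $e$, a nonzero off-diagonal entry at \emph{every} $e'$ with $t(e)=o(e')$ and $e'\neq\bar e$; that is $\deg(t(e))-1$ entries per row. The target matrix $M=I+J\varphi\mu-R(\mu+1)$ has at most two nonzero off-diagonal entries per row (at $\bar e$ and at $R(e)$). Since diagonal conjugation $D^{-1}(\,\cdot\,)D$ preserves the sparsity pattern, no choice of $D$ can turn $I-T^\varphi$ into a scalar multiple of $M$. Your identification of ``the turning term $R(\mu+1)$'' with ``the genuine Kac-Ward transitions'' is exactly where this breaks: the Kac-Ward transitions out of $e$ go to \emph{all} successors, not just to $R(e)$.

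What the paper does instead is multiply $I-T^\varphi$ (or rather its transpose $I-S$) on the right by the non-diagonal factor $I+iJ\varphi\nu$ (a Foata--Zeilberger-type trick), obtaining a matrix $A=I-\mathit{Succ}+\mathit{Com}$ whose determinant differs from $\tau^\varphi$ by the easily computed factor $\prod_e(1+\nu_e^2)$. The crucial step is then to multiply on the left by a second non-diagonal matrix $P=I-RQ$ with $Q(e)=\exp(\tfrac{i}{2}\beta(e,R(e)))e$. The phases in $\mathit{Succ}$ and $\mathit{Com}$ are geometric series in the rotation $R$, and multiplying by $I-RQ$ telescopes them: one finds $P\,\mathit{Succ}=-2iJ\varphi\nu$ and $P\,\mathit{Com}=-\nu^2(I+RQ)$, collapsing the many-entry rows down to the three-entry form. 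It is here, not in any vertex-phase bookkeeping, that the odd-multiple hypothesis enters: it forces $\det P_v=1-\exp(\tfrac{i}{2}\vartheta_v)=2$, which is the true source of the factor $2^{|V|}$ (and hence, after combining with the edge factors, of $2^{-\chi(G)}$). Only \emph{after} this collapse does a diagonal conjugation---using the isoradial identity $\beta(e,R(e))=\theta_e+\theta_{R(e)}$ and the critical-weight identities $i\tfrac{2\nu}{1-\nu^2}=\mu$, $\tfrac{\exp(i\theta_e)}{\cos\theta_e}=1+\mu_e$---finish the job of turning $A'$ into $M$.
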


\begin{proof}
Let $\mathit{Succ}\in\mathit{End}(\L({\EE}))$ be defined as follows: if $e$ is an oriented edge with terminus $t(e)=v$, then
\[
\mathit{Succ}(e)=i\varphi(e)\nu_e\sum_{e'\in E_v}\omega(e',\bar{e})\,e',
\]
where $\omega(e',\bar{e})=\exp(-\frac{i}{2}\beta(e',\bar{e}))$ for $e'\neq \bar{e}\in E_v$ with $\beta(e',\bar{e})$ as in Figure~\ref{fig:alpha}, and $\omega(\bar{e},\bar{e})=-1$. Also, let
$S\in\mathit{End}(\L({\EE}))$ be the endomorphism given by $S=\mathit{Succ}+iJ\varphi\nu$. By definition, the $\varphi$-twisted Kac-Ward matrix is the transposed of
$I-S$. Now, consider the matrix
\[
A=(I-S)(I+iJ\varphi\nu)=I-\mathit{Succ}+\mathit{Com},
\]
where
\[
\mathit{Com}(e)=-i\varphi(e)\nu_eS(\bar{e})=\nu_e^2\sum_{\genfrac{}{}{0pt}{}{e'\in E_v}{e'\neq e}}\omega(e',e)\,e'
\]
if $e$ has origin $o(e)=v$. Since
\[
\det(I+iJ\varphi\nu)=\prod_{e\in E}\det\left(\begin{array}{cc}1&i\varphi(\bar{e})\nu_e\cr i\varphi(e)\nu_e&1\cr\end{array}\right)=\prod_{e\in E}(1+\nu_e^2),
\]
we get the equality
\begin{equation}
\label{equ:A}
\tau^\varphi(G,\nu)=\prod_{e\in E}(1+\nu_e^2)^{-1}\det A.
\end{equation}
(The computation above is a variation on a trick due to Foata and Zeilberger, see \cite[Section 8]{F-Z}.)

\begin{figure}[Htb]
\labellist\small\hair 2.5pt
\pinlabel {$v$} at 50 270
\pinlabel {$e_1=R(e_d)$} at 390 90
\pinlabel {$e_2=R(e_1)$} at 445 350
\pinlabel {$e_3=R(e_2)$} at 260 510
\pinlabel {$e_4=R(e_3)$} at -30 500
\pinlabel {$\vdots$} at 5 280
\pinlabel {$e_d=R(e_{d-1})$} at 30 40
\pinlabel {$v$} at 780 260
\pinlabel {$e$} at 950 120
\pinlabel {$R(e)$} at 880 430
\pinlabel {$\theta_e$} at 940 215
\pinlabel {$\theta_{R(e)}$} at 960 300
\endlabellist
\centerline{\psfig{file=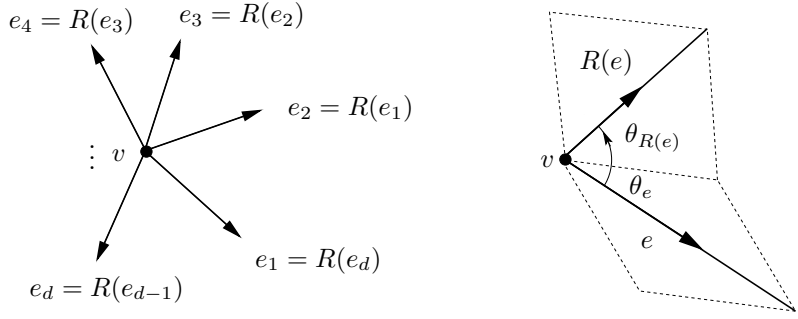,height=4cm}}
\caption{The endomorphism $R$ (to the left), and the equality $\beta(e,R(e))=\theta_e+\theta_{R(e)}$ (to the right).}
\label{fig:R}
\end{figure}

Let $Q\in\mathit{End}(\L({\EE}))$ be defined by
\[
Q(e)=\exp\left({\textstyle\frac{i}{2}}\beta(e,R(e))\right)e,
\]
and set $P=I-RQ$. Obviously, this endomorphism decomposes into $P=\bigoplus_{v\in V}P_v$ with $P_v\in\mathit{End}(\L({E}_v))$, and one easily computes
\[
\det P_v=1-\prod_{e\in E_v}\exp\left({\textstyle\frac{i}{2}}\beta(e,R(e))\right)=1-\exp\left({\textstyle\frac{i}{2}}\vartheta_v\right)=2,
\]
as the cone angle $\vartheta_v$ is an odd multiple of $2\pi$. Hence, the determinant of $P$ is $2^{|V|}$.
The point of introducing this $P$ is that it can be used to greatly simplify the matrix $A$. Indeed, let us compute the composition $PA$. If $e$ has terminus $v$, then
\begin{align*}
P\mathit{Succ}(e)&=(I-RQ)i\varphi(e)\nu_e\sum_{e'\in E_v}\omega(e',\bar{e})\,e'\cr
	&=i\varphi(e)\nu_e\sum_{e'\in E_v}\left(\omega(e',\bar{e})-\omega(R^{-1}(e'),\bar{e})\exp\left({\textstyle\frac{i}{2}}\beta(R^{-1}(e'),e')\right)\right)e'\cr
	&=-2i\varphi(e)\nu_e\,(\bar{e}).
\end{align*}
(To check that the coefficient of $e'=R(\bar{e})$ vanishes, we use once again the fact that $\vartheta_v$ is an odd multiple of $2\pi$.) Therefore, we have the equality
$P\mathit{Succ}=-2iJ\varphi\nu$. Similarly, given $e$ with $o(e)=v$,
\begin{align*}
P\mathit{Com}(e)=&(I-RQ)\nu_e^2\sum_{e'\in E_v\setminus\{e\}}\omega(e',e)\,e'\cr
	=&\nu_e^2\hskip-.2cm\sum_{e'\in E_v\setminus\{e,R(e)\}}\hskip-.2cm\left(\omega(e',e)-\omega(R^{-1}(e'),e)\exp\left({\textstyle\frac{i}{2}}\beta(R^{-1}(e'),e')\right)\right)e'\cr
	 &+\nu_e^2\left(\omega(R(e),e)\,R(e)-\omega(R^{-1}(e),e)\exp\left({\textstyle\frac{i}{2}}\beta(R^{-1}(e),e)\right)\,e\right)\cr
	=&-\nu^2_e(I+RQ)(e).
\end{align*}
These two equalities lead to
\begin{align*}
PA&=P(I-\mathit{Succ}+\mathit{Com})\cr
	&=(I-RQ)+2iJ\varphi\nu-(I+RQ)\nu^2\cr
	&=(1-\nu^2)I+2iJ\varphi\nu-RQ(1+\nu^2).
\end{align*}
We finally get
\begin{equation}
\label{equ:A'}
\det A=2^{-|V|}\prod_{e\in E}(1-\nu_e^2)^2\det A',
\end{equation}
where $A'$ is given by
\[
A'=I+iJ\varphi\,\frac{2\nu}{1-\nu^2}-RQ\,\frac{1+\nu^2}{1-\nu^2}.
\]

It is now time to use the fact that the weights $\nu_e$ are not any weights, but the critical ones given by $\nu_e=\tan(\theta_e/2)$. First observe that
\[
i\frac{2\nu_e}{1-\nu_e^2}=i\tan(\theta_e)=\mu_e\qquad\text{and}\qquad\frac{1+\nu_e^2}{1-\nu_e^2}=\frac{1}{\cos(\theta_e)}.
\]
Next, note the equality $\beta(e,R(e))=\theta_e+\theta_{R(e)}$ illustrated to the right of Figure~\ref{fig:R}. This implies that
$RQ(e)=\exp\left({\textstyle\frac{i}{2}}\theta_e\right)\exp\left({\textstyle\frac{i}{2}}\theta_{R(e)}\right)R(e)$. Multiplying each column of $A'$ (corresponding to $e$) by
$\exp\left({\textstyle\frac{i}{2}}\theta_e\right)$
and each line (corresponding to $e'$) by $\exp\left(-{\textstyle\frac{i}{2}}\theta_{e'}\right)$, we obtain a new matrix $M$ with
\begin{equation}
\label{equ:M}
\det A'=\det M\qquad\text{and}\qquad M=I+J\varphi\mu-R(1+\mu),
\end{equation} 
since $\theta_{\bar{e}}=\theta_e$ and $\frac{\exp(i\theta_e)}{\cos(\theta_e)}=1+\mu_e$. Equations~(\ref{equ:A}), (\ref{equ:A'}) and (\ref{equ:M}) give the statement of the lemma.
\end{proof}

In a second lemma, we now give a combinatorial interpretation of the determinant of this matrix $M$.

\begin{lemma}
\label{lemma:M}
The endomorphism $M=I+J\varphi\mu-R(1+\mu)$ of $\L({\EE})$ satisfies
\[
\det M=(-1)^{|V|}\prod_{e\in E}(1+\mu_e)\sum_{\genfrac{}{}{0pt}{}{F\subset G}{V(F)=V}}\prod_{\gamma\subset\partial N(F)}(1-\varphi(\gamma))\,\mu(F),
\]
the sum being on all subgraphs $F$ of $G$ spanning all vertices of $G$.
\end{lemma}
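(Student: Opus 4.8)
The plan is to expand $\det M$ as a sum over permutations (equivalently, over collections of vertex-disjoint oriented cycles in an auxiliary directed graph), identify which terms survive, and then regroup them geometrically. Write $M=I+J\varphi\mu-R(1+\mu)$, acting on $\L(\EE)$. The off-diagonal structure is: the entry from $e$ to $\bar e$ comes from $J\varphi\mu$ (coefficient $\varphi(e)\mu_e$, up to the bookkeeping of which of $e,\bar e$ is involved), and the entry from $e$ to $R(e)$ comes from $-R(1+\mu)$ (coefficient $-(1+\mu_e)$); the diagonal contributes $1$ from $I$ and $-(1+\mu_e)$ from $-R(1+\mu)$ only when $R(e)=e$, i.e.\ never for $d_v\ge 2$, so generically the diagonal of $M$ is just $I$ plus the $R$-contribution when it is diagonal. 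The Leibniz expansion $\det M=\sum_{\sigma}\mathrm{sgn}(\sigma)\prod_e M_{e,\sigma(e)}$ then only picks up permutations $\sigma$ of $\EE$ each of whose non-fixed orbits is built by alternately applying the moves ``$e\mapsto\bar e$'' and ``$e\mapsto R(e)$''. First I would show that such orbits are exactly in bijection with the connected components $\gamma$ of the boundary $\partial N(F)$ of the tubular neighborhood of a spanning subgraph $F\subset G$ (with $V(F)=V$): walking along $\partial N(F)$, one runs parallel to an edge of $F$ and then turns around the corner at a vertex to the next edge of $F$ in the cyclic $R$-order, which is precisely the composite $R\circ J$ reading the boundary curve. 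Fixed points of $\sigma$ correspond to oriented edges $e$ not in $F$: to produce the identity on $\EE\setminus\{\text{edges bounding }N(F)\}$ we take the diagonal $1$ from $I$.

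The next step is the coefficient bookkeeping. For a given $F$, the contributing permutation factors as a product over the boundary cycles $\gamma\subset\partial N(F)$, and within each $\gamma$ the product $\prod M_{e,\sigma(e)}$ collects: one factor $\varphi(e)\mu_e$ for each edge of $F$ traversed along $\gamma$ (from the $J\varphi\mu$ steps), one factor $-(1+\mu_{e'})$ for each corner turn (from the $-R(1+\mu)$ steps), and the relevant sign from $\mathrm{sgn}(\sigma)$. The $\mu$-weights multiply to $\mu(F)=\prod_{e\in F}\mu_e$ overall (each unoriented edge of $F$ has both its sides on $\partial N(F)$, but one should check the two sides contribute $\mu_e$ exactly once to the final product after the $\varphi$'s are combined — here the cocycle condition $\varphi(\bar e)=\overline{\varphi(e)}$ and $|\varphi(e)|=1$ is what lets the two half-contributions along a component assemble into $\varphi(\gamma)$). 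The factors $-(1+\mu_{e'})$: each corner turn sits at a vertex and ``uses up'' one of the $d_v$ germs at that vertex; summing the contributions of all subgraphs $F$ and tracking that each unoriented edge $e\in E$ contributes a $(1+\mu_e)$ whether or not it lies in $F$, one extracts the global prefactor $\prod_{e\in E}(1+\mu_e)$, matching the statement. The constant-for-$\varphi\equiv 1$ check ($\det M$ at $\varphi=1$ should reduce to the corresponding statement with each $(1-\varphi(\gamma))$ replaced by $0$ unless... no: at $\varphi=1$ every boundary cycle factor $(1-\varphi(\gamma))=0$, forcing $F$ to be empty, giving $\det M=(-1)^{|V|}\prod_e(1+\mu_e)$) should be verified directly against the Leibniz expansion to pin down the overall sign $(-1)^{|V|}$; this is where the $\mathrm{sgn}(\sigma)$ computation for the ``all corners, no edges'' permutation (the one coming purely from $-R(1+\mu)$ on each $E_v$, which has determinant $\prod_v(-1)^{d_v}\big(1-\prod\ldots\big)$ — already computed in Lemma~\ref{lemma:KW} as the $\det P_v$ type of identity) feeds in.

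The structure of the argument is thus: (i) set up the directed-graph / permutation model for $\det M$ and classify surviving permutations; (ii) establish the bijection {surviving permutations} $\leftrightarrow$ {$(F,\text{orientation data on }\partial N(F))$} with $F$ spanning $V$; (iii) do the weight and sign bookkeeping to get $\mu(F)$, the $(1-\varphi(\gamma))$ factors, and the global constants $(-1)^{|V|}$ and $\prod_{e\in E}(1+\mu_e)$. The main obstacle I anticipate is step (ii)–(iii) done carefully together: correctly matching each algebraic move in the expansion of $\det M$ to a geometric segment of $\partial N(F)$ while keeping the signs from $\mathrm{sgn}(\sigma)$ and the orientation-induced signs of the boundary curves consistent, and showing that the two ``sides'' of each edge of $F$ — which a priori give two factors involving $\mu_e$ and $\varphi(e),\varphi(\bar e)$ — recombine into a single $\mu_e$ together with the honest holonomy $\varphi(\gamma)$ of each boundary component. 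The factor $-R(1+\mu)$ splitting as $-R - R\mu$ means each corner turn itself carries a binary choice (the ``$1$'' versus the ``$\mu$''), and reconciling that with ``each edge of $F$ contributes one $\mu_e$, each edge of $E\setminus F$ contributes one $(1+\mu_e)$ to the prefactor'' is the delicate combinatorial accounting; I would handle it by fixing, for each unoriented edge, a convention assigning its $\mu_e$ to one of its two sides and its $1$ to the other, then checking the convention is consistent around every boundary component and every vertex.
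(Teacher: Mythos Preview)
Your proposal has a genuine gap at the first step. The non-fixed cycles in the Leibniz expansion of $\det M$ are \emph{not} obtained by alternating the moves $e\mapsto\bar e$ and $e\mapsto R(e)$: a contributing permutation may send each $e$ to $\bar e$ or to $R(e)$ independently, so for instance the pure $R$-cycle $(e, R(e),\dots, R^{d_v-1}(e))$ around a vertex $v$ and the pure $J$-cycle $(e,\bar e)$ both occur. There is therefore no direct bijection between contributing permutations and boundary curves of tubular neighbourhoods; the boundary interpretation (via orbits of $R^{-1}J$) only emerges after substantial regrouping. Your sanity check at $\varphi\equiv 1$ is also off: the edgeless spanning subgraph still has $|V|$ contractible boundary circles, each contributing a factor $(1-1)=0$, so both sides of the lemma vanish at $\varphi=1$; the value is not $(-1)^{|V|}\prod_e(1+\mu_e)$.

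The paper's route handles this as follows. After the Leibniz expansion one classifies each unoriented edge $e$ into five types according to how the permutation uses it (twice via a $J$-then-$R$ pattern; as the $2$-cycle $(e,\bar e)$; twice via $R$-moves on both orientations; once; not at all). Pairing type ``$(e,\bar e)$'' with type ``not covered'' (their contributions combine to $1-\mu_e^2$) and then expanding the $(1+\mu_e)$ factors coming from the ``twice via $R$'' type is what actually extracts the global prefactor $\prod_e(1+\mu_e)$ and reorganises the sum as $\sum_F c^\varphi_F\,\mu(F)$. A separate cancellation (inserting or deleting the pure $R$-cycle $\gamma_v$ around a vertex missed by $F$) shows $c^\varphi_F=0$ unless $V(F)=V$, after which $c^\varphi_F$ factors over the connected components $T$ of $F$ as $c^\varphi_T=\det(R-J\varphi)$ computed on $\EE(T)$. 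It is \emph{this} auxiliary determinant, not $\det M$ itself, whose cycle structure is governed by the single map $R^{-1}J$ and hence by the components of $\partial N(T)$. Your proposed shortcut---going straight from the Leibniz cycles of $M$ to $\partial N(F)$---skips precisely the combinatorial regrouping that makes that identification possible.
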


\begin{proof}
By definition, the coefficients of $M$ are given by
\[
M_{e,e'}=
\begin{cases}
1& \text{if $e=e'$;} \\
\varphi(e)\mu_e& \text{if $e'=\bar{e}$;} \\
-(1+\mu_e)&\text{if $e'=R(e)$,}
\end{cases}
\]
and vanish otherwise. Let us compute directly the determinant of $M$ as
\[
\det M=\sum_{\sigma\in S({\EE})}(-1)^{\text{sgn}(\sigma)}\prod_{e\in{\EE}}M_{e,\sigma(e)}.
\]
Each permutation $\sigma\in S({\EE})$ decomposes into disjoint cycles, inducing a partition ${\EE}=\bigsqcup_j E_j(\sigma)$ into orbits of length $\ell_j(\sigma)$.
The corresponding contribution to the determinant is
\[
(-1)^{\text{sgn}(\sigma)}\prod_{e\in E}M_{e,\sigma(e)}=\prod_j(-1)^{\ell_j(\sigma)+1}\prod_{e\in E_j(\sigma)}M_{e,\sigma(e)}.
\]
Since $M_{e,e}$ is equal to $1$, the oriented edges that are fixed by $\sigma$ contribute a trivial factor $1$ to this product and the corresponding orbits can be removed.
By definition of $M$, a permutation $\sigma\in S({\EE})$ will have a non-zero contribution only if each of the $n(\sigma)$ remaining orbits forms a cycle of oriented edges of $G$ such
that each oriented edge $e$ is either followed by $J(e)=\bar{e}$ or by $R(e)$, and such that these cycles pass through each edge of $G$ at most twice, and if so, in opposite directions.
If $\G(G)$ denotes the set of such union of cycles, we have
\[
\det M=\sum_{\gamma\in\G(G)}(-1)^{n(\gamma)}\prod_j\prod_{e\in\gamma_j}-M_{e,\gamma_j(e)},
\]
where each $\gamma\in\G(G)$ is written as a union of cycles $\gamma=\bigcup_{j=1}^{n(\gamma)}\gamma_j$. Now, for any fixed $\gamma$, a given edge $e\in E$ will fall in one of the following five
categories:
\begin{romanlist}
\item{$e$ is covered by $\gamma$ in both directions, as part of a cycle of the form $(\dots,R^{-1}(e),e,\bar{e},R(\bar{e}),\dots)$; the corresponding contribution to the determinant of $M$ is
$-\varphi(e)\mu_e(1+\mu_e)$.}
\item{$e$ is covered in both directions by the cycle $(e,\bar{e})$, so the contribution is $-\mu_e^2$. (The minus sign comes from the contribution of this cycle to $n(\gamma)$.)}
\item{$e$ is covered by $\gamma$ in both directions, as part of cycles of the form $(\dots,R^{-1}(e),e,R(e),\dots,R^{-1}(\bar{e}),\bar{e},R(\bar{e}),\dots)$;
in this case, the contribution is $(1+\mu_e)^2$.}
\item{$e$ is only covered in one direction, so the contribution is $(1+\mu_e)$.}
\item{$e$ is not covered at all, and the contribution is $1$.}
\end{romanlist}
For any given element $\gamma\in\Gamma(G)$, any edge of type $\ii$ in $\gamma$ can be removed (i.e. replaced by an edge of type $\mathit{(v)}$) and the resulting union of cycles will still belong to
$\Gamma(G)$. The converse also holds: any type $\mathit{(v)}$ edge in an element of $\Gamma(G)$ can be replaced by a type $\ii$ edge, the result will be in $\Gamma(G)$. Therefore, each time
an edge appears as an edge of type $\ii$ of some element of $\Gamma(G)$, it also appears as an edge of type $\mathit{(v)}$ of some other element of $\Gamma(G)$ and vice versa.
Using the equality $(1-\mu_e^2)=(1+\mu_e)(1-\mu_e)$, we therefore can factor out a term $1+\mu_e$ for each $e\in E$, leading to
\[
\det M=\prod_{e\in E}(1+\mu_e)\sum_{\gamma\in\G(G)}(-1)^{n(\gamma)}\prod_{e\in\gamma\i}\varphi(e)(-\mu_e)\prod_{e\in\gamma\ii}\mu_e\prod_{e\in\gamma\iii}(1+\mu_e),
\]
where $\gamma\i$ (resp. $\gamma\ii,\gamma\iii$) denotes the set edges of type $\i$ (resp. $\ii,\iii$) of $\gamma$. Next, we wish to expand the last product above as
\[
\prod_{e\in\gamma\iii}(1+\mu_e)=\sum_{\gamma'\subset\gamma\iii}\prod_{e\in\gamma'}\mu_e.
\]
Using the notation
\[
\widetilde{\G}(G)=\{(\gamma,\gamma')\,|\,\gamma\in\G(G),\;\gamma'\subset\gamma\iii\},
\]
we get the equality
\[
\det M=\prod_{e\in E}(1+\mu_e)\sum_{\widetilde\gamma\in\widetilde\G(G)}(-1)^{n(\widetilde\gamma)}\varphi(\widetilde{\gamma})\prod_{e\in\mathit{Supp}(\widetilde{\gamma})}\mu_e,
\]
where $n(\widetilde{\gamma})=n(\gamma)+|\gamma\i|$, the support of $\widetilde{\gamma}$ is $\mathit{Supp}(\widetilde{\gamma})=\gamma\i\cup\gamma\ii\cup\gamma'$ and
$\varphi(\widetilde{\gamma})=\prod_{e\in\gamma\i}\varphi(e)$ for any $\widetilde{\gamma}\in\widetilde{\G}(G)$. In other words,
\begin{equation}
\label{equ:tilde}
\frac{\det M}{\prod_{e\in E}(1+\mu_e)}=\sum_{F\subset E}c_F^\varphi\,\mu(F),\quad\text{where }
c_F^\varphi=\sum_{\genfrac{}{}{0pt}{}{\widetilde{\gamma}\in\widetilde{\G}(G)}{\mathit{Supp}(\widetilde{\gamma})=F}}(-1)^{n(\widetilde\gamma)}\varphi(\widetilde{\gamma}).
\end{equation}

We shall now check that $c^\varphi_F$ vanishes whenever $F$ does not span all vertices of $G$. Let us assume that the vertex $v$ is not spanned by $F$.
Let $\gamma_v$ be the cycle $(e,R(e),R^2(e),\dots,R^{d-1}(e))$ with $e\in E_v$ and $d=|E_v|$. The set of
$\widetilde{\gamma}\in\widetilde{\G}(G)$ with $\mathit{Supp}(\widetilde{\gamma})=F$ is equal to the disjoint union of $\G_0$ and $\G_1$, where $\G_0$ (resp. $\G_1$) denotes the set of such 
$\widetilde{\gamma}=(\gamma,\gamma')$ with $\gamma$ containing (resp. not containing) $\gamma_v$.
As $v$ does not belong to $\mathit{Supp}(\widetilde{\gamma})$, the mapping $(\gamma,\gamma')\mapsto(\gamma\cup\gamma_v,\gamma')$ gives a well-defined bijection
$f\colon\G_0\to\G_1$. Since $n(f(\widetilde{\gamma}))=n(\widetilde{\gamma})+1$ and $\varphi(f(\widetilde{\gamma}))=\varphi(\widetilde{\gamma})$, we get
\[
c_F^\varphi=\sum_{\widetilde{\gamma}\in\G_0\sqcup\G_1}(-1)^{n(\widetilde\gamma)}\varphi(\widetilde{\gamma})=
\sum_{\widetilde{\gamma}\in\G_0}((-1)^{n(\widetilde{\gamma})}+(-1)^{n(f(\widetilde{\gamma}))})\varphi(\widetilde{\gamma})=0.
\]

So, let us assume that $F\subset E$ spans all vertices of $G$, and let $\{T_k\}_k$ denote the connected components of $F$ viewed as a subgraph of $G$. Given $\gamma\in\G(G)$, let $\gamma|_{T_k}$ denote
the restriction of the cycles composing $\gamma$ to the oriented edges of $T_k$. One easily checks that if $(\gamma,\gamma')\in\widetilde{\G}(G)$ has support equal to $F$, then $\gamma|_{T_k}$
belongs to $\G(T_k)$ and $\gamma|_{T_k}(\bullet)=\gamma(\bullet)\cap T_k$ for $\bullet\in\{\mathit{i,ii,iii}\}$. Hence, sending $(\gamma,\gamma')$ to $(\gamma|_{T_k},\gamma'\cap T_k)$
gives well-defined maps
\[
\pi_k\colon\{\widetilde{\gamma}\in\widetilde{\G}(G)\,|\,\mathit{Supp}(\widetilde{\gamma})=F\}\to\{\widetilde{\gamma}_k\in\widetilde{\G}(T_k)\,|\,\mathit{Supp}(\widetilde{\gamma}_k)=T_k\},
\]
which in turn induce
\[
\{\widetilde{\gamma}\in\widetilde{\G}(G)\,|\,\mathit{Supp}(\widetilde{\gamma})=F\}\to\prod_k\{\widetilde{\gamma}_k\in\widetilde{\G}(T_k)\,|\,\mathit{Supp}(\widetilde{\gamma}_k)=T_k\}.
\]
Using the fact that $F$ spans all vertices of $G$, one can check that this map is a bijection.
Since $n(\widetilde{\gamma})=\sum_kn(\pi_k(\widetilde{\gamma}))$ and $\varphi(\widetilde{\gamma})=\prod_k\varphi(\pi_k(\widetilde{\gamma}))$ for any $\widetilde{\gamma}$ with support
equal to $F$, it follows that
\begin{equation}
\label{equ:c}
c^\varphi_F=\prod_k c^\varphi_{T_k}\quad\text{where }
c_T^\varphi=\sum_{\genfrac{}{}{0pt}{}{\widetilde{\gamma}\in\widetilde{\G}(T)}{\mathit{Supp}(\widetilde{\gamma})=T}}(-1)^{n(\widetilde\gamma)}\varphi(\widetilde{\gamma}).
\end{equation}
Note that the value of $c^\varphi_T$ does not depend on $G$ anymore. Furthermore, since $\mathit{Supp}(\widetilde{\gamma})$ must be equal to the whole of $T$, $\widetilde{\gamma}=(\gamma,\gamma')$
must satisfy $\gamma'=\gamma\iii$. Therefore,
\[
c_T^\varphi=\sum_{\genfrac{}{}{0pt}{}{\gamma\in\G(T)}{\mathit{Supp}(\gamma)=T}}(-1)^{n(\gamma)+|\gamma\i|}\varphi(\gamma),
\]
where $\mathit{Supp}(\gamma)=\gamma\i\cup\gamma\ii\cup\gamma\iii$ and $\varphi(\gamma)=\prod_{e\in\gamma\i}\varphi(e)$.

Let $N^\varphi_T\in\mathit{End}(\L({\EE}(T))$ be the endomorphism given by $N^\varphi_T=R-J\varphi$. Developing the determinant of $N^\varphi_T$ explicitely, we get
\[
\det N^\varphi_T=\sum_{\genfrac{}{}{0pt}{}{\gamma\in\G(T)}{\mathit{Supp}(\gamma)=T}}(-1)^{n(\gamma)+|\gamma\i|}\varphi(\gamma)=c^\varphi_T.
\]
As before, the set ${\EE}(T)$ is equal to $\bigsqcup_{v\in V(T)}E_v(T)$, and the endomorphism $R$ splits into $R=\bigoplus_{v\in V(T)}R_v$. Since $\det R_v=(-1)^{|E_v(T)|+1}$, we get
\[
c_T^\varphi=\det(R)\det(I-R^{-1}J\varphi)=(-1)^{|V(T)|}\det(I-R^{-1}J\varphi).
\]
Now, the orbits $E_j$ of the action of $R^{-1}J$ on the set ${\EE}(T)$ correspond exactly to the connected components of the oriented boundary of a tubular neighborhood $N(T)$ of $T$ in $\SI$,
where $N(T)$ is endowed with the clockwise orientation. Hence
\begin{equation}
\label{equ:N}
c_T^\varphi=(-1)^{|V(T)|}\prod_j\Big(1-\prod_{e\in E_j}\varphi(e)\Big)=(-1)^{|V(T)|}\prod_{\gamma\subset\partial N(T)}(1-\varphi(\gamma)).
\end{equation}
Since $|V|=|V(F)|=\sum_k|V(T_k)|$ and $N(F)=\bigsqcup_k N(T_k)$, Equations (\ref{equ:tilde}), (\ref{equ:c}) and (\ref{equ:N}) give the statement of the lemma.
\end{proof}

\begin{proof}[Proof of Proposition~\ref{prop:tech}]
Lemmas~\ref{lemma:KW} and \ref{lemma:M} give the equality
\[
\tau^\varphi(G,\nu)=C\,\sum_{\genfrac{}{}{0pt}{}{F\subset G}{V(F)=V}}\prod_{\gamma\subset\partial N(F)}(1-\varphi(\gamma))\,\mu(F),
\]
where the constant $C$ is equal to
\begin{align*}
C&=(-1)^{|V|}2^{-\chi(G)}\prod_{e\in E}\frac{\cos^2(\theta_e)}{1+\cos(\theta_e)}(1+i\tan(\theta_e))\cr
	&=(-1)^{|V|}2^{-\chi(G)}\prod_{e\in E}\frac{\cos(\theta_e)}{1+\cos(\theta_e)}\exp(i\theta_e).
\end{align*}
This is equal to the value given in the statement of the proposition, since
\[
\prod_{e\in E}\exp(i\theta_e)=\prod_{v\in V}\prod_{e\in E_v}\exp(i\theta_e/2)=\prod_{v\in V}\exp(i\vartheta_v/4).
\]
Finally, note that if a component of $F$ is a tree, then $\partial N(F)$ will contain a trivial cycle and the corresponding coefficient will vanish. Therefore, we can sum over all elements
of the set $\F(G)$ defined at the beginning of the section. This concludes the proof of the proposition.
\end{proof}

\subsection{The duality theorem}
\label{sub:duality}

We are now ready to state and prove one of our main results.

\begin{theorem}
\label{thm:duality}
Let $G$ be a graph isoradially embedded in a flat surface $\SI$, and let $\nu$ be the critical weight system on $G$.
The dual graph $G^*$ is also isoradially embedded in $\SI$, and therefore admits a critical weight system $\nu^*$. If all cone angles are odd multiples of $2\pi$, then for any $\varphi\in H^1(\SI;S^1)$,
\[
2^{|V(G^*)|}\hskip-2pt\prod_{e^*\in E(G^*)}(1+\cos(\theta_{e^*}))\,\tau^\varphi(G^*,\nu^*)=2^{|V(G)|}\hskip-2pt\prod_{e\in E(G)}(1+\cos(\theta_{e}))\,\tau^\varphi(G,\nu).
\]
\end{theorem}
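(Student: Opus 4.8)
The plan is to read the identity off directly from the combinatorial formula of Proposition~\ref{prop:tech}, applied to $G$ and to $G^*$. The only geometry needed is that the rhombus attached to an edge $e\in E(G)$ is the same as the rhombus attached to its dual edge $e^*\in E(G^*)$, with the two diagonals interchanged. Hence $G^*$ is isoradially embedded in $\SI$ with half-rhombus angles $\theta_{e^*}=\pi/2-\theta_e$; in particular $\cos\theta_{e^*}=\sin\theta_e$, and the quantity $\mu^*_{e^*}=i\tan\theta_{e^*}$ occurring in Proposition~\ref{prop:tech} for $G^*$ equals $i\cot\theta_e=-\mu_e^{-1}$.

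By Proposition~\ref{prop:tech}, extended as in its proof so that the sum runs over all subgraphs $F\subseteq G$ with $V(F)=V(G)$ (the extra, tree-component terms contributing $0$, since then $\partial N(F)$ has a contractible component),
\[
\tau^\varphi(G,\nu)=C_G\sum_{F}\ \prod_{\gamma\subset\partial N(F)}\bigl(1-\varphi(\gamma)\bigr)\ \mu(F),
\]
and similarly $\tau^\varphi(G^*,\nu^*)=C_{G^*}\sum_{F^*}\prod_{\gamma\subset\partial N(F^*)}(1-\varphi(\gamma))\,\mu^*(F^*)$ over all $F^*\subseteq G^*$ with $V(F^*)=V(G^*)$. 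I would then use the bijection $F\mapsto F^*$ between spanning subgraphs of $G$ and of $G^*$, where $F^*$ is the subgraph with edge set $\{e^*:e\in E(G)\setminus E(F)\}$ (the dual of the complement). The key topological point is that, in the rhombic decomposition of $\SI$, the ribbon neighbourhoods $N(F)$ and $N(F^*)$ are complementary: inside each rhombus one sees either a strip along the primal diagonal together with small disks at the two dual corners, or the mirror picture, so that $N(F)$ and $N(F^*)$ meet precisely along the two boundary arcs of that rhombus and lie on opposite sides of them. Thus $\partial N(F)$ and $\partial N(F^*)$ are the same system of simple closed curves on $\SI$, but carrying opposite (clockwise) orientations. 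Since $\varphi(\bar\gamma)=\varphi(\gamma)^{-1}$, $1-\varphi(\gamma)^{-1}=-\varphi(\gamma)^{-1}(1-\varphi(\gamma))$, and $\prod_{\gamma\subset\partial N(F)}\varphi(\gamma)=1$ (the class $[\partial N(F)]\in H_1(\SI;\Z)$ being zero as it bounds $N(F)$), replacing $F^*$ by $F$ in each term only introduces a global sign $(-1)^{b(F)}$, where $b(F)$ is the number of connected components of $\partial N(F)$.

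A ribbon-neighbourhood Euler characteristic computation — starting from $\chi(N(F))=\chi(F)=|V(G)|-|E(F)|$ and the value $2-2\gamma-\beta$ of the Euler characteristic of a compact oriented surface of genus $\gamma$ with $\beta$ boundary circles — gives $b(F)\equiv|V(G)|+|E(F)|\pmod 2$. On the other hand $\mu^*(F^*)=\prod_{e\notin E(F)}(-\mu_e^{-1})=(-1)^{|E(G)|-|E(F)|}\bigl(\prod_{e\in E(G)}\mu_e^{-1}\bigr)\mu(F)$, so $(-1)^{b(F)}\mu^*(F^*)=(-1)^{|V(G)|+|E(G)|}\bigl(\prod_e\mu_e^{-1}\bigr)\mu(F)$, a constant multiple of $\mu(F)$. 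Summing term by term therefore gives
\[
\tau^\varphi(G^*,\nu^*)=\frac{C_{G^*}}{C_G}\,(-1)^{|V(G)|+|E(G)|}\Bigl(\prod_{e\in E(G)}\mu_e^{-1}\Bigr)\,\tau^\varphi(G,\nu).
\]
It then remains to identify the prefactor. Using $\mu_e=i\tan\theta_e$ and $\cos\theta_{e^*}=\sin\theta_e$, together with $\chi(G)-\chi(G^*)=|V(G)|-|V(G^*)|$, the relation $\prod_{v\in V(G)}\exp(i\vartheta_v/2)=(-1)^{|V(G)|}$ (each $\vartheta_v$ being an odd multiple of $2\pi$), and Gauss--Bonnet in the form $\sum_{x}\vartheta_x=2\pi|E(G)|$ (which follows from $\sum_x(2\pi-\vartheta_x)=2\pi\chi(\SI)$, the sum over all vertices and face centres, and $|V(G)|+|V(G^*)|-|E(G)|=\chi(\SI)$), a direct computation reduces the prefactor to $(-1)^{|V(G)|+|V(G^*)|+|E(G)|}\,2^{|V(G)|-|V(G^*)|}\prod_{e}\frac{1+\cos\theta_e}{1+\cos\theta_{e^*}}$. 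Finally $|V(G)|+|V(G^*)|+|E(G)|=2|E(G)|+\chi(\SI)$ is even, so the sign is $+1$, and multiplying through by $2^{|V(G^*)|}\prod_{e^*}(1+\cos\theta_{e^*})$ yields exactly the asserted equality. The step I expect to need the most care is the claim that $N(F)$ and $N(F^*)$ share a common boundary with opposite induced orientations — that is, pinning down the sign $(-1)^{b(F)}$; once that is settled, the rest is bookkeeping with Euler characteristics and Gauss--Bonnet.
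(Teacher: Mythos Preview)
Your proposal is correct and follows essentially the same route as the paper: apply Proposition~\ref{prop:tech} to $G$ and to $G^*$, match the two sums via the dual-complement bijection $F\leftrightarrow F^*$, use $\partial N(F)=-\partial N(F^*)$ together with $\varphi(\partial N(F))=1$ to get the sign $(-1)^{b(F)}$, and then reduce the resulting constant with Gauss--Bonnet. The only cosmetic differences are that the paper restricts the bijection to the subset $\widetilde{\F}(G)$ of spanning subgraphs containing no face-boundary cycle (you correctly note the extra terms vanish anyway), and that the paper asserts the parity of $b(F)+|E(F^*)|$ is ``easily seen'' to be $|V(G^*)|$ whereas you compute $b(F)\equiv |V(G)|+|E(F)|\pmod 2$ directly from $\chi(N(F))$; these are equivalent via $|V(G)|+|V(G^*)|-|E(G)|=\chi(\Sigma)\equiv 0\pmod 2$.
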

\begin{proof}
Given a subgraph $F$ of $G$, let $\psi(F)$ denote the subgraph of $G^*$ given by
\[
\psi(F)=\{e^*\in E(G^*)\,|\,e\notin F\}.
\]
Obviously, $\psi$ defines a bijection from the set of subgraphs of $G$ onto the set of subgraphs of $G^*$, with inverse $\psi^{-1}(F^*)=\{e\in E(G)\,|\,e^*\notin F^*\}$. Let $\widetilde{\F}(G)$
denote the set of subgraphs of $G$ spanning all vertices of $G$, and containing no cycle that is the boundary of a face. These two conditions being dual to each other, the map $\psi$ defines a bijection
$\psi\colon\widetilde{\F}(G)\to\widetilde{\F}(G^*)$. Now, consider the sum
\[
D^\varphi(G,\mu)=\sum_{F\in\widetilde{\F}(G)}c_F^\varphi\,\mu(F),\quad\text{with } c_F^\varphi=\prod_{\gamma\subset\partial N(F)}(1-\varphi(\gamma)).
\]
Since dual edges have rhombus half-angles related by $\theta_{e^*}=\frac{\pi}{2}-\theta_e$, the weight $\mu_e=i\tan(\theta_e)$ satisfies $\mu_{e^*}=-\mu_e^{-1}$. Therefore,
\[
\prod_{e\in E(G)}\mu_e^{-1} D^\varphi(G,\mu)=\sum_{F\in\widetilde{\F}(G)}c_F^\varphi\,\mu^{-1}(\psi(F))=\hskip-2.2pt\sum_{F^*\in\widetilde{\F}(G^*)}(-1)^{|E(F^*)|}c_F^\varphi\,\mu^*(F^*).
\]
Furthermore, for any $F\in\widetilde{\F}(G)$, one can decompose the surface $\SI$ as the union of two tubular neighborhoods $N(F)$ and $N(\psi(F))$ of $F$ and $\psi(F)$, pasted along their
common boundary. (This is illutrated in Figure~\ref{fig:dec}.)
\begin{figure}[Htb]
\centerline{\psfig{file=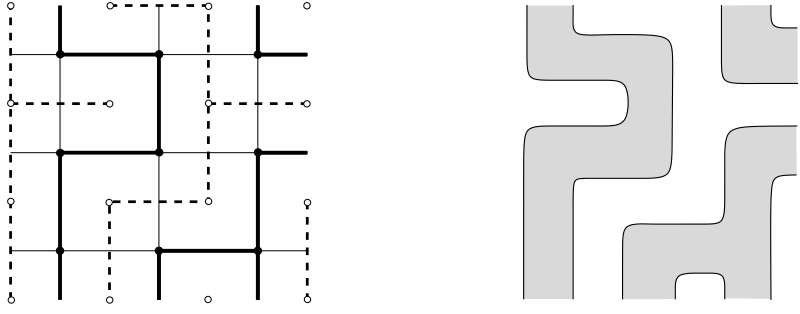,height=4cm}}
\caption{A graph $F$ (in solid lines) together with the associated graph $\psi(F)$ (in dashed lines), and the corresponding decomposition $\SI=N(F)\cup N(\psi(F))$.}
\label{fig:dec}
\end{figure}
In particular, the oriented boundaries satisfy $\partial N(F)=-\partial N(\psi(F))$. Therefore, since $\varphi$ is a 1-cocycle,
\begin{align*}
c_{F^*}^\varphi=&\prod_{\gamma^*\subset\partial N(F^*)}(1-\varphi(\gamma^*))=\prod_{\gamma\subset\partial N(F)}(1-\varphi^{-1}(\gamma))\cr
	=&(-1)^{|\partial N(F)|}\prod_{\gamma\subset\partial N(F)}\varphi^{-1}(\gamma)\,c_F^\varphi=(-1)^{|\partial N(F)|}\varphi^{-1}(\partial N(F))\,c_F^\varphi\cr
	=&(-1)^{|\partial N(F)|}c_F^\varphi,
\end{align*}
where $|\partial N(F)|$ denotes the number of boundary components of $N(F)$. Note also that, since $F$ spans all vertices of $G$, adding or removing an edge to $F$ changes the parity
of $|E(F^*)|$ and of $|\partial N(F)|$. Therefore, the parity of their sum does not depend on $F$, and it is easily seen to be equal to the parity of $|V(G^*)|$, the number of faces of $G\subset\SI$.
We have proved:
\[
\prod_{e\in E(G)}\mu_e^{-1} D^\varphi(G,\mu)=(-1)^{|V(G^*)|}D^\varphi(G^*,\mu^*).
\]
Since $c_F^\varphi$ vanishes whenever $F\in\widetilde{\F}(G)$ does not belong to $\F(G)$, Proposition~\ref{prop:tech} gives the equalities
\begin{align*}
\tau=&(-1)^{|V(G)|}2^{-\chi(G)}\prod_{v\in V(G)}\exp(i\vartheta_v/4)\prod_{e\in E(G)}\frac{\cos(\theta_e)}{1+\cos(\theta_e)}D^\varphi(G,\mu)\cr
\tau^*=&(-1)^{|V(G^*)|}2^{-\chi(G^*)}\prod_{v\in V(G^*)}\exp(i\vartheta_{v^*}/4)\prod_{e^*\in E(G^*)}\frac{\cos(\theta_{e^*})}{1+\cos(\theta_{e^*})}D^\varphi(G^*,\mu^*),
\end{align*}
where $\tau$ and $\tau^*$ stand for $\tau^\varphi(G,\nu)$ and $\tau^\varphi(G^*,\nu^*)$, respectively. The three equations above yield the equality
\[
2^{|V(G^*)|}\prod_{e^*\in E(G^*)}(1+\cos(\theta_{e^*}))\,\tau^*=A\cdot2^{|V(G)|}\prod_{e\in E(G)}(1+\cos(\theta_{e}))\,\tau,
\]
where the constant $A$ is equal to
\begin{align*}
A=& (-1)^{|V(G)|}\frac{\prod_{v}\exp(i\vartheta_v/4)}{\prod_{v^*}\exp(i\vartheta_{v^*}/4)}\prod_{e}\frac{\cos(\theta_e)}{\cos(\theta_{e^*})}i\tan(\theta_e)\cr
	=&(-1)^{|V(G)|+|V(G^*)|}\prod_{v}\exp(i\vartheta_v/4)\prod_{v^*}\exp(i\vartheta_{v^*}/4)(-1)^{|E(G)|}i^{-|E(G)|} \cr
	=& i^{\chi(\SI)}(-1)^{N+N^*},
\end{align*}
with $N$ the number of $v\in V(G)$ such that $\vartheta_v/2\pi$ is congruent to $3$ modulo $4$ (and similarly for $N^*$). By the discrete Gauss-Bonnet formula (recall Subsection~\ref{sub:flat}),
\[
2\pi\chi(\SI)=\sum_v(2\pi-\vartheta_v)+\sum_{v^*}(2\pi-\vartheta_{v^*}).
\]
This means exactly that $N+N^*$ and $\frac{1}{2}\chi(\SI)$ have the same parity, so $A$ is equal to $1$ and the theorem is proved.
\end{proof}

\subsection{Kac-Ward matrices versus discrete Laplacians}

Quite surprisingly, if the surface $\SI$ has genus zero or one, then our Kac-Ward determinants with critical weights turn out to be proportional to the determinants of discrete critical Laplacians.
Let us briefly recall the definition and main properties of these objects before stating the precise result.

As first observed by Eckmann~\cite{Eck} (in a much more general context), the Laplace operator on the space of complex valued smooth functions on a Riemann surface admits a beautifully simple discretization. It is the operator $\Delta$ on $C^0(G;\C)=\C^{V(G)}$ given by
\[
(\Delta f)(v)=\sum_{e=(v,w)}x_e\,(f(v)-f(w)),
\]
for any $f\in\C^{V(G)}$, where the sum is over all oriented edges $e$ of the form $(v,w)$.
Its codimension-one minors are very useful, as they count the number of (weighted) spanning trees in $G$: this is Kirchhoff's
celebrated matrix tree theorem~\cite{Kir}. On the other hand, the determinant of $\Delta$ always vanishes.

This construction admits a straightforward generalization:

\begin{definition}
\label{def:Delta}
Let $(G,x)$ be any weighted graph, and $\varphi\colon\pi_1(G)\to S^1$ any representation. The associated {\em discrete Laplacian\/}
is the operator $\Delta^\varphi=\Delta^\varphi(G,x)$ on $\C^{V(G)}$ defined by
\[
(\Delta^\varphi f)(v)=\sum_{e=(v,w)}x_e\,\left(f(v)-f(w)\varphi(e)\right),
\]
for any $f\in\C^{V(G)}$, where the sum is over all oriented edges $e=(v,w)$.
\end{definition}

The determinant of this discrete Laplacian has a nice combinatorial interpretation, which goes back at least to Forman~\cite{For}.
We include a proof here for the sake of completeness.

Given a graph $G$, let $\F_1(G)$ denote the set of subgraphs $F\subset G$ such that
$F$ spans all the vertices of $G$, and each connected component $T$ of $F$ has a unique cycle (i.e: $|V(T)|=|E(T)|$).

\begin{proposition}[Forman~\cite{For}]
\label{prop:For}
For any weighted graph $(G,x)$ and any $\varphi\colon\pi_1(G)\to S^1$,
\[
\det\Delta^\varphi(G,x)=\sum_{F\in\F_1(G)}\prod_{T\subset F}(2-\varphi(T)-\varphi(T)^{-1})\,x(F),
\]
where the product is over all connected components $T$ of $F$, $x(F)=\prod_{e\in F} x_e$, and $\varphi(T)=\prod_{e\in C_T}\varphi(e)$ with $C_T$ the unique cycle in $T$ endowed with
an arbitrary orientation.
\end{proposition}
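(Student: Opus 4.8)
The plan is to compute $\det\Delta^\varphi(G,x)$ directly from the permutation expansion, exactly in the spirit of the proof of Lemma~\ref{lemma:M} above. Write $\Delta^\varphi$ as a matrix indexed by $V(G)$. Its off-diagonal entry at $(v,w)$ is $-\sum_{e=(v,w)}x_e\varphi(e)$ (sum over all oriented edges from $v$ to $w$, to accommodate multiple edges), and its diagonal entry at $v$ is $\sum_{e=(v,w)}x_e$, i.e.\ the total weight of edges incident to $v$ (loops, if present, contributing trivially since $f(v)-f(v)\varphi(e)$ need not vanish; I will assume $G$ has no loops, or treat them separately). Expanding
\[
\det\Delta^\varphi=\sum_{\sigma\in S(V)}(-1)^{\mathrm{sgn}(\sigma)}\prod_{v\in V}(\Delta^\varphi)_{v,\sigma(v)},
\]
I decompose each $\sigma$ into disjoint cycles. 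A cycle of length $\ge 2$ through vertices $v_1,\dots,v_k$ contributes a product of off-diagonal entries, which forces a choice of an oriented edge between each $v_i$ and $v_{i+1}$; summing over those choices, such a cycle contributes $(-1)^{k+1}$ times a sum over closed walks of length $k$ visiting $v_1,\dots,v_k$ in order, weighted by $(-1)^k\prod x_e\varphi(e)$. A fixed point $v$ of $\sigma$ contributes its diagonal entry, which itself expands as a sum over the choice of one incident edge $e=(v,w)$ with weight $x_e$.

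The key combinatorial step is then to reorganize this sum: each $\sigma$ together with the edge-choices at fixed points and along cycles picks out a spanning subgraph $F$ in which every vertex has a marked ``outgoing'' edge (a functional graph), and the marked edges of a subset of vertices form the disjoint cycles of $\sigma$ while the rest form rooted forests feeding into those cycles. Collapsing the contributions of the rooted-tree part (telescoping sums of the form $\sum_{e=(v,w)}x_e(\text{something independent of the choice})$, which is where the diagonal term ``$\sum x_e$'' pays for all the tree edges) shows that, after cancellation, only configurations survive in which every connected component is unicyclic, i.e.\ $F\in\F_1(G)$; and the sign and $\varphi$-weight of a unicyclic component $T$ with cycle $C_T$ collapses to $(-1)^{?}(2-\varphi(T)-\varphi(T)^{-1})$, the two terms $\varphi(T)$ and $\varphi(T)^{-1}$ coming from the two orientations of $C_T$ (which correspond to the two cyclic orderings of the cycle in $\sigma$) and the ``$2$'' coming from the two ways the diagonal/permutation bookkeeping can realize the cycle. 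The multiplicative structure $\prod_{T\subset F}$ over components is immediate once the surviving configurations are identified, since $\det$ of a block-diagonal-up-to-relabeling matrix factors over components.

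The main obstacle I expect is the bookkeeping of signs and the telescoping that kills all non-unicyclic spanning subgraphs: one must verify that the total weight of every component with $|E(T)|>|V(T)|$ (more than one independent cycle) or with a pendant tree cancels exactly, and that the residual sign on a unicyclic component is $+1$ so that the clean formula $(2-\varphi(T)-\varphi(T)^{-1})\,x(F)$ with a plus sign survives. A cleaner alternative, which I would actually write, is to factor $\Delta^\varphi=d_\varphi^{*}\,d_\varphi$ up to the diagonal weight matrix: introduce the twisted coboundary $d_\varphi\colon\C^{V}\to\C^{\EE}/(\text{orientation})$ and apply the Cauchy--Binet formula, so that $\det\Delta^\varphi=\sum_{F}|\det(d_\varphi|_F)|^2$-type sum over spanning subgraphs $F$ with $|E(F)|=|V(F)|$; the determinant of the restriction to a unicyclic component is then, by a direct computation on the cycle plus trees, $\pm(1-\varphi(T)^{\pm 1})$, and pairing the two square-root-like contributions yields $(2-\varphi(T)-\varphi(T)^{-1})$. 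Either route is routine linear algebra; I will present the Cauchy--Binet version as it makes the product over components and the absence of spurious signs most transparent, and I will remark that components which are trees contribute $0$ because then $d_\varphi|_F$ is not square of full rank, recovering the classical vanishing of $\det\Delta$ in the untwisted case.
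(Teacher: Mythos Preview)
Your first approach---permutation expansion reorganized via functional graphs---is exactly the paper's route. The paper expands $\det\Delta^\varphi$ over permutations, writes the diagonal contribution at each fixed vertex $v$ as $\sum_{e\ni v}x_e=\sum_{X(v)}x_{X(v)}$, and then swaps the order of summation so as to sum first over ``vector fields'' $X\in\mathcal{X}(V)$ (your functional graphs) and then over subsets $\gamma$ of the set of trajectories of $X$. The inner sum factors as $\prod_C(1-\varphi(C))$; passing from $X$ to the underlying unoriented graph $F=\{X(v)\}_{v\in V}$ then lands in $\F_1(G)$, each $F$ having $2^{\#\text{components}}$ preimages (one per orientation of each cycle), and the pairing $(1-\varphi(C))(1-\varphi(-C))=2-\varphi(C)-\varphi(C)^{-1}$ finishes. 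Your worry about cancelling components with $|E(T)|>|V(T)|$ is misplaced: a functional graph on a finite vertex set has \emph{exactly one} directed cycle in each weakly connected component, so such components never occur and no telescoping or cancellation is needed. (Functional graphs whose ``cycle'' is a back-and-forth on a single edge have $\varphi(C)=1$ and drop out automatically.)

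Your Cauchy--Binet alternative is a genuinely different and equally valid route. Writing $\Delta^\varphi=d_\varphi^{*}\,X\,d_\varphi$ with $d_\varphi$ the twisted incidence matrix and $X$ the diagonal edge-weight matrix, Cauchy--Binet gives $\det\Delta^\varphi=\sum_{|F|=|V|}x(F)\,|\det(d_\varphi)_F|^2$; block-diagonality over components forces every block to be square (hence every component unicyclic), and on a single unicyclic block one computes $|\det|^2=|1-\varphi(C_T)|^2=2-\varphi(T)-\varphi(T)^{-1}$. This buys you the product over components and the absence of stray signs for free, at the cost of setting up $d_\varphi$ carefully; the paper's vector-field reindexing avoids that setup but needs the summation swap. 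Either is fine.
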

\begin{proof}
Evaluating directly the determinant of $\Delta^\varphi(G,x)$ as a sum over permutations of $V=V(G)$, we get
\[
\det\Delta^\varphi(G,x)=\sum_{\sigma\in S(V)}(-1)^{\text{sgn}(\sigma)}\prod_{v\in V}\Delta_{v,\sigma(v)}.
\]
Each permutation $\sigma\in S(V)$ decomposes into disjoint cycles, inducing a partition $V=\bigsqcup_j V_j(\sigma)$.
A permutation will have a non-zero contribution to the determinant only if these orbits correspond to the disjoint union of oriented simple closed curves
$\gamma=\bigsqcup_{j=1}^{|\gamma|}\gamma_j$ in $G$. If $\Lambda(G)$ denotes the set of such union of curves, we get
\[
\det\Delta^\varphi(G,x)=\sum_{\gamma\in\Lambda(G)}(-1)^{|\gamma|}\varphi(\gamma)x(\gamma)\prod_{v\notin\gamma}\sum_{e\ni v}x_e,
\]
with $x(\gamma)=\prod_{e\in\gamma}x_e$ and $\varphi(\gamma)=\prod_{e\in\gamma}\varphi(e)$.

Given any subset $W\subset V$, let $\mathcal{X}(W)$ denote the set of maps $X$ assigning to each
$v\in W$ an oriented edge $X(v)\in{\EE}$ with origin $v$, and set $x(X)=\prod_{v\in W}x_{W(v)}$. For any ``vector field" $X\in\mathcal{X}(V)$, let $\Lambda(X)$ denote the set of
disjoint unions of ``trajectories" of $X$: these are oriented simple closed curves $C$ in $G$ such that any oriented edge $e\in C$ with origin $o(e)$ satisfies $X(o(e))=e$. We have
\begin{align*}
\det\Delta^\varphi(G,x)&=\sum_{\gamma\in\Lambda(G)}(-1)^{|\gamma|}\varphi(\gamma)x(\gamma)\sum_{\widetilde{X}\in\mathcal{X}(V\setminus(V\cap\gamma))}x(\widetilde{X})\cr
	&=\sum_{X\in\mathcal{X}(V)}\sum_{\gamma\in\Lambda(X)}(-1)^{|\gamma|}\varphi(\gamma)\,x(X)\cr
	&=\sum_{X\in\mathcal{X}(V)}\prod_C(1-\varphi(C))\,x(X),
\end{align*}
the product being over all trajectories $C$ of $X$. (This is the original formula of Forman~\cite{For}.) The map assigning to each $X\in\mathcal{X}(V)$ the subgraph $F=\{X(v)\}_{v\in V}$ is surjective
onto the set $\F_1(G)$ defined above, and each $F\in\F_1(G)$ has $2^{|F|}$ preimages given by the possible choices of orientations of the cycle in each of the $|F|$ connected components of $F$.
The proposition now follows from the equality $(1-\varphi(C))(1-\varphi(-C))=2-\varphi(C)-\varphi(C)^{-1}$.
\end{proof}

Let us mention one more fact: if $G$ is a planar isoradial graph, then the corresponding critical weights for the discrete Laplacian are given by $c_e=\tan(\theta_e)$, where
$\theta_e$ denotes the half-rhombus angle~\cite{Ken}. Therefore, we shall refer to $\Delta^\varphi(G,c)$ as the {\em critical discrete Laplacian\/} on $G$.
Finally, we shall make the usual abuse of notation and denote by the same letter $\varphi$ a homomorphism $\pi_1(\Sigma)\to S^1$ and the induced homomorphism on $\pi_1(G)$ for $G\subset\SI$.

\begin{theorem}
\label{thm:Delta}
Let $G$ be a graph isoradially embedded in a flat surface $\SI$, and let us assume that all cone angles $\vartheta_v$
of singularities $v\in V(G)$ are odd multiples of $2\pi$. Let $\nu_e=\tan(\theta_e/2)$ denote the critical weight system on $G\subset\Sigma$, and set $c_e=\tan(\theta_e)$.
If the genus of $\SI$ is zero or one, then for any $\varphi\colon\pi_1(\Sigma)\to S^1$,
\[
\tau^\varphi(G,\nu)=(-1)^N\,2^{-\chi(G)}\prod_{e\in E(G)}\frac{\cos(\theta_e)}{1+\cos(\theta_{e})}\,\det\Delta^\varphi(G,c),
\]
where $N$ is the number of vertices $v\in V(G)$ such that $\vartheta_v/2\pi$ is congruent to $3$ modulo $4$ and $\chi(G)=|V(G)|-|E(G)|$.
On the other hand, the functions $\tau^\varphi(G,\nu)$ and $\det\Delta^\varphi(G,c)$ are never proportional if the genus of $\SI$ is greater or equal to two.
\end{theorem}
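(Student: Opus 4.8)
\emph{Overview.} The plan is to compare two combinatorial expansions: the formula for $\tau^\varphi(G,\nu)$ from Proposition~\ref{prop:tech} and Forman's formula for $\det\Delta^\varphi(G,c)$ from Proposition~\ref{prop:For}. Writing $\mu_e=i\,c_e$, one has $\mu(F)=i^{|E(F)|}c(F)$; and for $F\in\F(G)$ with connected components $T_1,\dots,T_m$ the identity $|E(F)|=|V(G)|+\sum_k(b_1(T_k)-1)$, with $b_1$ the first Betti number, shows that $|E(F)|=|V(G)|$ precisely when every component is unicyclic, i.e.\ when $F\in\F_1(G)$, and $|E(F)|>|V(G)|$ otherwise. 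Thus the $\F_1(G)$-part of Proposition~\ref{prop:tech} is ``the same size'' as Forman's sum, and everything hinges on whether the remaining subgraphs contribute.

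\emph{Genus $0$ or $1$.} The key geometric input is the following claim: \emph{if $\SI$ has genus at most one and $T\subset G$ is a connected subgraph with $b_1(T)\ge 2$, then some boundary component $\gamma$ of the regular neighborhood $N(T)\subset\SI$ bounds an embedded disc in $\SI$} (hence $\varphi(\gamma)=1$, so $\prod_{\gamma\subset\partial N(T)}(1-\varphi(\gamma))=0$). I would prove this by an Euler-characteristic count: cutting $\SI$ along $\partial N(T)$ produces $N(T)$, a compact orientable surface with $b_1=b_1(T)\ge 2$, together with complementary pieces $\{P\}$ satisfying $\sum_P\chi(P)=\chi(\SI)-\chi(N(T))$ and $\sum_P|\partial P|=|\partial N(T)|$; feeding in genus$(\SI)\le 1$ forces at least one $P$ to be a disc (the case genus $0$ being automatic since $H_1(\SI)=0$). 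Granting this, a subgraph not in $\F_1(G)$ has a component of $b_1\ge2$, whose contribution to Proposition~\ref{prop:tech} vanishes, so only $F\in\F_1(G)$ survive. For such $F$ each $N(T_k)$ is an annulus with core the unique cycle $C_{T_k}$, so $\prod_{\gamma\subset\partial N(T_k)}(1-\varphi(\gamma))=(1-\varphi(C_{T_k}))(1-\varphi(C_{T_k})^{-1})=2-\varphi(C_{T_k})-\varphi(C_{T_k})^{-1}$, matching Forman's factor, while $\mu(F)=i^{|V(G)|}c(F)$. Hence $\tau^\varphi(G,\nu)=C\,i^{|V(G)|}\det\Delta^\varphi(G,c)$, and it remains to check $C\,i^{|V(G)|}=(-1)^N2^{-\chi(G)}\prod_{e}\tfrac{\cos\theta_e}{1+\cos\theta_e}$. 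This is exactly the bookkeeping from the proof of Theorem~\ref{thm:duality}: writing $\vartheta_v=2\pi(2k_v+1)$ one gets $(-1)^{|V(G)|}i^{|V(G)|}\prod_v e^{i\vartheta_v/4}=\prod_v(-1)^{k_v}=(-1)^N$, using the discrete Gauss--Bonnet formula and the definition of $N$.

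\emph{Genus $\ge 2$.} Suppose, for contradiction, that $\tau^\varphi(G,\nu)=\kappa\,\det\Delta^\varphi(G,c)$ for a constant $\kappa$ and all $\varphi$. By the computation above this is equivalent to
\[
C\sum_{F\in\F(G)\setminus\F_1(G)}\Bigl(\prod_{\gamma\subset\partial N(F)}(1-\varphi(\gamma))\Bigr)i^{|E(F)|}c(F)=\bigl(\kappa-C\,i^{|V(G)|}\bigr)\det\Delta^\varphi(G,c).
\]
The right-hand side is homogeneous of degree $|V(G)|$ in the edge weights (every $F\in\F_1(G)$ has $|E(F)|=|V(G)|$), whereas the left-hand side involves only monomials $c(F)$ with $|E(F)|>|V(G)|$; so we obtain a contradiction once we produce an $F_0\in\F(G)\setminus\F_1(G)$ with $\prod_{\gamma\subset\partial N(F_0)}(1-\varphi(\gamma))\not\equiv 0$ as a function of $\varphi$. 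For this, use that $\SI\setminus G$ consists of discs, so $G$ fills $\SI$ and contains a connected subgraph $T_0$ with $b_1(T_0)=2$ whose neighborhood $N(T_0)$ is an embedded pair of pants with boundary classes $a,b,-a-b$, for homologically independent $a,b\in H_1(\SI;\Z)$ with $a\cdot b=0$ — exactly the configuration the genus-$\le1$ lemma forbids, but one which is available as soon as the genus is at least two. Padding $T_0$ with short cycles and trees to a spanning $F_0\in\F(G)$, the classes $a,b,a+b$ are all nonzero, so $\prod_{\gamma\subset\partial N(F_0)}(1-\varphi(\gamma))$ is genuinely nonconstant in $\varphi$, and the contradiction follows.

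\emph{Main obstacle.} Two steps carry the weight. First, the Euler-characteristic lemma controlling $\partial N(T)$ for $b_1(T)\ge2$ in genus $\le 1$ — including the one-holed-torus and annular complementary configurations — is the real geometric content of the positive part. Second, for the genus-$\ge2$ half one must both locate the subgraph $T_0$ inside the \emph{given} filling graph $G$ and justify that the higher-degree remainder cannot be absorbed into $\det\Delta^\varphi(G,c)$; since the weights $c_e$ are constrained by the flat structure rather than free, this last point should be run at the level of the polynomial identities underlying Propositions~\ref{prop:tech} and~\ref{prop:For}, that is, by comparing the coefficients of the squarefree monomials $c(F)$.
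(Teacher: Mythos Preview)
Your plan coincides with the paper's in all essential respects.

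For genus at most one, the paper argues slightly differently for the vanishing of $c_T^\varphi$ when $b_1(T)\ge 2$: it uses that $c_T^\varphi$ depends only on the homotopy type of $T\subset\SI$ to reduce to a wedge of $n=b_1(T)$ circles, and then observes that for $n\ge 2g$ either the circles are homologically dependent (so one boundary curve of $N(T)$ is null-homologous) or they span $H_1(\SI)$ (so $\partial N(T)$ is a single null-homologous curve). Your Euler-characteristic count on the complementary pieces is a correct and more direct variant for $g\le 1$; the paper's argument has the side benefit of giving the general bound $b_1(T)\le 2g-1$. The identification with Forman's formula and the constant computation are carried out exactly as in the paper (your reference to Gauss--Bonnet is unnecessary here, but harmless).

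For genus at least two, the paper builds the witness directly as a \emph{spanning connected} subgraph: take a spanning tree of $G$ and add two edges whose fundamental cycles are homologically independent with zero intersection number (possible since $g\ge 2$ and $G$ fills $\SI$). This avoids your ``padding'' step, which as written is problematic: separate tree components are excluded from $\F(G)$ by definition and would force $c_{F_0}^\varphi=0$ anyway, while separate ``short cycles'' might be null-homologous and again kill the coefficient. If what you meant is to attach trees to $T_0$ so as to keep it connected and make it spanning, then this is precisely the paper's construction.

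Your flagged obstacle---that the $c_e$ are fixed numbers rather than free variables, so a naive degree-of-monomial comparison does not immediately conclude---is legitimate. The paper is equally laconic on this point: it simply asserts (``by the argument above'') that exhibiting one $F\in\F(G)\setminus\F_1(G)$ with $c_F^\varphi\not\equiv 0$ suffices to preclude proportionality, without spelling out why the extra terms cannot conspire to a scalar multiple of $\det\Delta^\varphi$ as functions of $\varphi$.
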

\begin{proof}
If $\varphi$ is trivial, then both sides of the equality vanish by Propositions~\ref{prop:tech} and \ref{prop:For}. In particular, the equality holds in the genus zero case,
so it can be assumed that the genus $g$ of $\SI$ is positive. Let $F$ be an element of the set $\F(G)$, that is, a spanning subgraph of $G$ such that no connected component of $F$
is a tree. For any connected component $T$ of $F$, let $c^\varphi_T$ denote the corresponding coefficient $c^\varphi_T=\prod_{\gamma\subset\partial N(F)}(1-\varphi(\gamma))$. First note that
$c_T^\varphi$ only depends on the homotopy type of $T$ in $\SI$. (Basically, one can deform $T$ continuously in $\SI$ without changing $c_T^\varphi$.) Therefore, it can be assumed that $T$ is a wedge
of $n$ circles, with $n=1-\chi(T)=1+|E(T)|-|V(T)|\ge 1$.
If $n$ is greater then $2g$, then these $n$ cycles are linearly dependant in $H_1(\SI;\Z)$. Via a homotopy of $(\SI,T)$, one can therefore
assume that one of these cycles is null-homologous, leading to $c_T^\varphi=0$. If $n$ is equal to $2g$, then either these cycles are linearly dependant in $H_1(\SI;\Z)$ and $c_T^\varphi$ vanishes
as above, or these cycles are independent in homology. In this case, $T$ induces a cellular decomposition of $\SI$, so $\partial N(T)$ is the boundary of $f$ faces with
\[
2-2g=\chi(\SI)=|V(T)|-|E(T)|+f=1-2g+f.
\]
Therefore, $f$ is equal to $1$, so $\partial N(T)$ is connected, hence null-homologous, and $c_T^\varphi$ vanishes in this case as well.
We have proved that $c_F^\varphi$ vanishes unless each connected component $T$ of $F$ satisfies $0\le |E(T)|-|V(T)| \le 2g-2$.

In the case of genus $1$, this shows that $c_F^\varphi$ vanishes unless $F$ belongs to the set $\F_1(G)$. For such an element $F$, the contribution of each connected component $T$ can be easily computed:
$\partial N(T)$ consists of two connected components homologous to $T$ and $-T$, so
\[
c_T^\varphi=(1-\varphi(T))(1-\varphi(-T))=2-\varphi(T)-\varphi(T)^{-1}.
\]
By Propositions~\ref{prop:tech} and \ref{prop:For}, we now have
\begin{align*}
\tau^\varphi(G,\nu)&=C\sum_{F\in\F(G)}\prod_{\gamma\subset\partial N(F)}(1-\varphi(\gamma))\,\mu(F)\cr
	&=C\sum_{F\in\F_1(G)}\prod_{T\subset F}(2-\varphi(T)-\varphi(T)^{-1})\,\mu(F)\cr
	&=C\,\det\Delta^\varphi(G,\mu)\cr
	&=C\,i^{|V|}\det\Delta^\varphi(G,c),
\end{align*}
since $\mu_e=ic_e$. The equality now easily follows from the explicit value for the constant $C$ given in Proposition~\ref{prop:tech}.

Let us finally assume that the genus of $\SI$ is greater or equal to two. By the argument above, it is enough to show that there is some element $F\in\F(G)\setminus\F_1(G)$ with non-zero
coefficient $c^\varphi_F$. In particular, we just need to find a spanning connected subgraph $T\subset G$ with $|E(T)|-|V(T)|=1$ and such that none of the boundary components of $N(T)$
is trivial in $H_1(\SI;\Z)$. Such a $T$ is obtained as follows: choose a spanning tree $T_0\subset G$ and add two edges of $G$ so that the resulting two cycles are linearly independent in $H_1(\SI;\Z)$
but have zero intersection number. (This is possible since $\SI$ has genus $g>1$ and $G$ induces a cellular decomposition of $\SI$.)
The resulting graph $T$ satisfies the conditions listed above, and the proof is completed.
\end{proof}

\begin{remark}
\label{rem:BdT}
Boutillier and de Tili\`ere obtained a similar result in \cite{BdT1,BdT2} (see also~\cite{dT}): given any graph $G$ isoradially embedded in the flat torus,
they relate $\det\Delta^\varphi(G,c)$ with the determinant of some $\varphi$-twisted Kasteleyn matrix of $(\G_G,\nu)$, where $\G_G$ is the graph associated to $G$ via some variation of
the Fisher correspondence~\cite{Fi2}.  Actually, using the methods developed in \cite[Subsection 4.3]{Cim2}, one can show that Corollary 12 of~\cite{BdT2} is equivalent to our
Theorem~\ref{thm:Delta} in the flat toric case (that is, when $g=1$ and $S$ is empty). In our opinion, our approach has several advantages. It is more general, as we allow singularities
and understand the higher genus while Boutillier-de Tili\`ere only deal with the flat toric case;
it is more natural, as we work on the same graph $G$ throughout without using one of many possible auxiliary graphs $\G_G$; and it is simpler, as our whole proof relies
solely on Propositions~\ref{prop:tech} and \ref{prop:For}. The demonstration of Boutillier-de Tili\`ere, on the other hand, is quite substantial and relies on highly non-trivial results of
Kenyon~\cite{Ken} and Kenyon-Okounkov~\cite{K-O}.
\end{remark}

We conclude this section with one last remark.
Let us assume that $G$ is isoradially embedded in the flat torus $\SI=\C/\Lambda$, with $\Lambda$ some lattice in $\C$, and let $\widetilde{G}$ be the corresponding $\Lambda$-periodic planar graph.
For $n\ge 1$, set $G_n=\widetilde{G}/n\Lambda$. The {\em free energy per fundamental domain\/} of the critical Z-invariant Ising model on $\widetilde{G}$ is defined by
\[
f^I=-\lim_{n\to\infty}\frac{1}{n^2}\log Z^J(G_n),
\]
with $J_e=\frac{1}{2}\log\left(\frac{1+\sin\theta_e}{\cos\theta_e}\right)$ as explained in Subsection~\ref{sub:Z}. The equality
\[
Z^J(G_n)=\Big(\prod_{e\in E(G)}\cosh(J_e)\Big)^{n^2}2^{n^2|V(G)|}Z(G_n,\nu)
\]
together with Theorems~\ref{thm:Arf} and~\ref{thm:Delta} imply
\[
f^I=-|V(G)|\frac{\log(2)}{2}-\frac{1}{2}\log{\textstyle\det_1}\Delta(G,c),
\]
where $\log\det_1\Delta(G,c)=\lim_{n\to\infty}\frac{1}{n^2}\log\det\Delta^\lambda(G_n,c)$ and $\lambda$ is one of the three non-trivial spin structures on the torus.
(On the torus, spin structures are canonically identified with $H^1(\SI;\Z_2)$, the trivial spin structure has Arf invariant $1$ and the three others have Arf invariant $0$.
The trivial spin structure does not contribute to the partition function by Proposition~\ref{prop:tech}, and it is a fact that the three other spin structures will have the same
contribution to the free energy of the model.)
The free energy $f^I$ was computed by Baxter in~\cite{Bax2}, and the normalized determinant by Kenyon in~\cite{Ken} -- even though the existence of the limit was not proved there.
One can check that these two results are related in the way displayed above. This gives a reality check to our computations, and allows to obtain any of these two results as a corollary of the other one.

\subsection{The Kac-Ward determinants as discrete $\overline\partial$-torsions}
\label{sub:RS}

In this last subsection, we wish to relate (in an informal way) the critical Kac-Ward determinants with the $\overline\partial$-torsions of the underlying Riemann surface.
Let us start by briefly recalling the definition of this invariant, in the special case relevant to us.

To any closed N-dimensional complex manifold $\SI$ endowed with a unitary representation $\varphi\colon\pi_1(\SI)\to U(n)$ and a Hermitian metric, Ray and Singer \cite{RS2} associate a sequence of
numbers $T_p(\SI,\varphi)$ with $p=0,1,\dots,N$. In the case of a Riemann surface ($N=1$), the numbers $T_0(\SI,\varphi)$ and $T_1(\SI,\varphi)$ coincide, leading to a single invariant
$T(\SI,\varphi)$. If $n=1$, then $\varphi$ is a unitary character (that is, an element of $H^1(\SI;S^1)$), and it induces a complex line bundle $L(\varphi)$ over $\SI$.
Since $\SI$ is endowed with a Hermitian metric, one can consider the associated Laplacian $\Delta$ on the space of smooth sections of $L(\varphi)$.
The {\em $\overline\partial$-torsion\/} $T(\SI,\varphi)$ is defined as the square root of the zeta-regularized determinant
of this Laplacian, that is,
\[
T(\SI,\varphi)=\exp\Big(-{\textstyle\frac{1}{2}}\,\zeta^\prime(0)\Big),
\]
where $\zeta(s)$ is the zeta function of the Laplacian $\Delta$. Of course, this depends on the choice of Hermitian metric. However, for any two non-trivial characters $\varphi,\varphi'$, the ratio
$T(\Sigma,\varphi)/T(\Sigma,\varphi')$ is independent of this choice \cite[Theorem 2.1]{RS2}.

Let us come back to the discrete setting. As before, let $G$ be a graph isoradially embedded in a flat surface $\SI$ with all cone angles at singularities in $V(G)$ being odd multiples of $2\pi$.
Recall that the flat metric defines a conformal structure on the underlying surface, so that $\SI$ is now a Riemann surface. Consider the following statement.

\medskip

{\em As a function of $\varphi\in H^1(\SI;S^1)$, the Kac-Ward determinant $\tau^\varphi(G,\nu)$ behaves -- up to a multiplicative constant -- as
a discrete version of $T(\SI,\varphi)^2$, the square of the corresponding $\overline\partial$-torsion.}

\medskip

We shall not give any proof of this vague statement, not even formulate a precise conjecture (although this is very tempting).
Instead, we shall simply give a list of evidences towards such a statement.

First of all, recall that given any two non-trivial characters $\varphi,\varphi'$, the ratio $T(\Sigma,\varphi)/T(\Sigma,\varphi')$ only depends on the Riemann surface $\Sigma$ and on $\varphi,\varphi'$,
but not on the Hermitian metric.
Therefore, one expects the corresponding ratios $\tau^\varphi(G,\nu)/\tau^{\varphi'}(G,\nu)$ to exhibit some independence of the choice of the graph $G$ embedded in the surface $\Sigma$.
As mentioned in the introduction, the papers~\cite{CSM1,CSM2} provide numerical evidences towards such a claim at the scaling limit (in the special case where $\varphi$ and $\varphi'$ are
discrete spin structures). Now, our duality result can be understood as a further step in that direction. Indeed, it is a direct consequence of Theorem~\ref{thm:duality} that the ratios
$\tau^\varphi(G,\nu)/\tau^{\varphi'}(G,\nu)$ and $\tau^\varphi(G^*,\nu^*)/\tau^{\varphi'}(G^*,\nu^*)$ actually {\em coincide\/}, without taking any scaling limit, for any two non-trivial characters
$\varphi,\varphi'$.

In the remainder of the discussion, we shall distinguish between the genus zero, the genus one, and the higher genus cases.

\medskip

\noindent{\bf The genus zero case.}
Whenever the character $\varphi$ is trivial, $\tau^\varphi(G,\nu)$ vanishes by Proposition~\ref{prop:tech} (actually, Lemma~\ref{lemma:KW} is enough).
Since the kernel of the continuous Laplacian has dimension one, $T(\SI,\varphi)$ is also equal to zero. In particular, $\tau^\varphi(G,\nu)$ and $T(\SI,\varphi)^2$
trivially coincide in the case of genus zero.

\medskip

\noindent{\bf The genus one case.}
In the toric case, Theorem~\ref{thm:Delta} shows that $\tau^\varphi(G,\nu)$ is proportional to the most `natural' discretization of $T(\SI,\varphi)^2$, that is,
the determinant of the critical discrete Laplacian.

\medskip

\noindent{\bf The case of genus $\mathbf{g>1}$.}
Let us now consider the case of a Riemann surface of genus $g>1$. It can be written $\SI=H/\Gamma$, where $H$ is the Poincar\'e upper half plane and $\Gamma$ a discrete subgroup of $\mathit{PSL}(2,\R)$.
Since $\Sigma$ is compact, each element $\gamma\in\pi_1(\Sigma)=\Gamma$ acts on $H$ via $\gamma(z)=\exp(2\rho_\gamma)\frac{z-z_0}{z-z_1}$ for some real fixed points $z_0,z_1$ and $\rho_\gamma>0$.
Given any unitary character $\varphi$ of $\pi_1(\SI)=\Gamma$, the corresponding $\overline\partial$-torsion satisfies
\[
T(\SI,\varphi)^2=C\cdot Z_\Gamma(1,\varphi),
\]
with $C$ a constant depending on the Hermitian metric and $Z_\Gamma(\sigma,\varphi)$ the Selberg zeta function defined by
\[
Z_\Gamma(\sigma,\varphi)=\prod_{\{\gamma\}}\prod_{k\ge 0}(1-\varphi(\gamma)\exp(-2\rho_\gamma(\sigma+k))).
\]
Here, the product is over all conjugacy classes of primitive elements of $\Gamma$, that is, elements that are not powers in $\Gamma$ \cite[Theorem 4.6]{RS2}.
As explained in~\cite{Sar}, there is a more geometric way to understand this zeta function: conjugacy classes of primitive elements of $\Gamma$ correspond to primitive closed geodesics on $\Sigma$,
and $2\rho_\gamma$ is the length $\ell_\gamma$ of the corresponding geodesic. Summing up (and neglecting the problems of convergence), we have the equality
\[
T(\SI,\varphi)^2=C\prod_{\gamma\in\mathcal{P}(\SI)}\prod_{k\ge 1}(1-\varphi(\gamma)\exp(-k\ell_\gamma)),\eqno{(\star)}
\]
where $\mathcal{P}(\SI)$ denotes the set of primitive closed geodesics in $\SI$.

Let us make the corresponding study on the discrete side. Given an oriented closed path $\gamma$ on a graph $G$, we will say that $\gamma$ is {\em reduced} if it never backtracks,
that is, if no oriented edge $e$ is immediately followed by the oriented edge $\bar{e}$. The oriented closed path $\gamma$ will be called {\em prime} if, when viewed as a cyclic word,
it cannot be expressed as the product $\delta^r$ of a given closed path $\delta$ for any $r\ge 2$. Finally, we shall denote by $\mathcal{P}(G)$ the set of prime reduced oriented closed paths
in a graph $G$. Let us now assume that $G$ is isoradially embedded in a flat surface $\SI$. As explained in the
proof of Theorem~\ref{thm:Arf}, Bass' Theorem \cite{Bas} implies the equality
\[
\tau^\varphi(G,\nu)=\prod_{\gamma\in\mathcal{P}(G)}\Big(1-\varphi(\gamma)\exp\left(\textstyle{\frac{i}{2}}\alpha_\gamma\right)\nu(\gamma)\Big),\eqno{(\star\star)}
\]
where $\nu(\gamma)=\prod_{e\in\gamma}\nu_e$ and $\alpha_\gamma$ is the sum of the angles $\alpha(e,e')$ along $\gamma$ (recall Figure~\ref{fig:alpha}).

The expression $(\star\star)$ can be understood as a discrete version of $(\star)$ in the following sense. Primitive closed geodesics on $\SI$ are replaced by homotopically non-trivial
reduced prime closed paths in $G\subset\SI$. The factor $\exp\left(\textstyle{\frac{i}{2}}\alpha_\gamma\right)$ measures to which extend $\gamma$ differs from a straight line
(i.e. a geodesic on $\SI$ with respect to the flat metric). And finally, the critical weight $\nu_e=\tan(\theta_e/2)$ plays the role of $\exp(-\ell_e)$, with $\ell_e$ the length of the edge $e$.

\bibliographystyle{plain}

\bibliography{Ising}

\end{document}